\documentclass[aps, prx, notitlepage, twocolumn,longbibliography,superscriptaddress,floatfix]{revtex4-2}
\usepackage{stmaryrd}
\usepackage{amssymb}
\usepackage{amsmath}
\usepackage{amsthm}
\usepackage{hyperref}
\usepackage{cleveref}
\usepackage{physics}
\usepackage{amsthm, thmtools}
\hypersetup{
    colorlinks,
    linkcolor={blue},
    citecolor={blue},
    urlcolor={blue},
    pdfauthor={Jon Nelson, Gregory Bentsen, Steven T. Flammia, Michael J. Gullans},
    pdftitle={Fault-tolerant quantum memory using low-depth random circuit codes}
}
\usepackage{listings}

\usepackage[utf8]{inputenc}
\usepackage{color}
\usepackage{qcircuit}
\usepackage{graphicx}
\usepackage{empheq}
\usepackage{placeins}
\usepackage{physics}
\usepackage{bbm}
\DeclareMathOperator*{\E}{\mathbf{E}}
\newcommand{\mc}{\mathcal}
\newcommand{\Id}{\mathbb{I}}

\DeclareMathOperator\eye{\mathbb{I}}

\declaretheorem[numberwithin=section]{theorem}

\declaretheorem[sibling=theorem]{fact}

\theoremstyle{definition}

\bibliographystyle{apsrev4-1} 
\begin{document}
\title{Error correction phase transition in noisy random quantum circuits}

\author{Jon Nelson\textsuperscript{*}}

\author{Joel Rajakumar\textsuperscript{*}}

\author{Michael J. Gullans}
\affiliation{ Joint Center for Quantum Information \& Computer Science, University of Maryland and NIST}
\affiliation{Department of Computer Science,
	University of Maryland}
\thanks{These authors contributed equally to this work. \\email: nelson1@umd.edu, jrajakum@umd.edu}

\begin{abstract}
    In this work, we study the task of encoding logical information via a noisy quantum circuit. It is known that at superlogarithmic-depth, the output of any noisy circuit without reset gates or intermediate measurements becomes indistinguishable from the maximally mixed state, implying that all input information is destroyed. This raises the question of whether there is a low-depth regime where information is preserved as it is encoded into an error-correcting codespace by the circuit. When considering noisy random encoding circuits, our numerical simulations show that there is a sharp phase transition at a critical depth of order $p^{-1}$, where $p$ is the noise rate, such that below this depth threshold quantum information is preserved, whereas after this threshold it is lost. Furthermore, we rigorously prove that this is the best achievable trade-off between depth and noise rate for any noisy circuit encoding a constant rate of information. Thus, random circuits are optimal noisy encoders in this sense.
\end{abstract}

\maketitle

\section{Introduction} 

Random quantum circuits are known to exhibit a variety of phase transitions when subjected to intermediate nonunitary operations such as measurement or noise. The main feature that many of these effects have in common is that when the rate of these nonunitary operations exceeds a critical threshold, the circuit is no longer able to prevent quantum information from escaping. A prominent example is the measurement-induced phase transition in monitored random circuits \cite{PhysRevB.100.134306,PhysRevX.9.031009,PhysRevB.98.205136}, which has been directly linked to a transition in the ability of the circuit to retain memory of its input state \cite{Gullans_2020}. Similarly, Weinstein et al. \cite{Weinstein_2023} showed that randomly swapping qubits into the environment leads to a phase transition at which point the initial state's information becomes entirely lost to the environment. In a related study, Qian et al. \cite{qian2024coherentinformationphasetransition} identified a phase transition in coherent information driven by the interplay between single-qubit noise and the rate at which fresh ancilla qubits are introduced.

Most of these works focus on phase transitions that are independent of the circuit depth. In contrast, we study random circuits subject to mid-circuit noise, where the effects of noise compound with increasing depth. This cumulative behavior gives rise to a rich phase diagram that depends jointly on noise strength and circuit depth, which we fully characterize through numerical simulations and matching analytical bounds.

For sufficiently low depth, \cite{nelsonrandomcircuitcodes} designed a protocol that recovers logical information from the output of low-depth noisy random quantum circuits. However, it is known that at $\omega(\log n)$ \footnote{In this paper, we adopt the standard asymptotic notation 
$O$, $\Omega$, $\Theta$, $o$, and $\omega$ 
to describe relationships between the growth rates of two nonnegative functions 
$f(n)$ and $g(n)$ as $n \to \infty$. 
We write $f = O(g)$ when $\lim_{n \to \infty} f(n)/g(n) < \infty$, 
which is equivalent to stating that $g = \Omega(f)$. 
If $\lim_{n \to \infty} f(n)/g(n) = 0$, we denote this by 
$f = o(g)$, or equivalently $g = \omega(f)$. 
The notation $f = \Theta(g)$ indicates that both 
$f = O(g)$ and $f = \Omega(g)$ are satisfied. When a tilde is used, as in $\tilde{\Theta}$, 
it signifies that multiplicative polylogarithmic factors in $n$ 
are omitted from the scaling.} depth, the output of any noisy circuit converges to the maximally mixed state implying that all input information is destroyed \cite{M_ller_Hermes_2016,aharonov1996limitationsnoisyreversiblecomputation}. This suggests that as the depth is scaled up there must exist some transition between a regime where the input quantum state is recoverable to a regime where it is destroyed by the accumulation of noise. We numerically study this question for the case of circuits with gates drawn from either the Clifford or Haar ensemble and discover that a phase transition does indeed exist at a critical depth of $d^* = \Theta(1/p)$, where $p$ is the noise strength. Our numerics show that below this depth threshold, the majority of the logical information is still preserved by the circuit, but after this threshold the majority of this information is lost to the environment. This expands upon the scope of \cite{Weinstein_2023}, which shows a separate depth-independent critical noise rate that lies above ours, beyond which \textit{all} of the logical information is lost in finite depth.

Furthermore, we analytically prove that our observed above-threshold behavior holds for \textit{any} noisy circuit and not just the random ones we consider in our numerics. In particular, we prove that for $d> d^*$ the coherent information of the noisy circuit is upper bounded strictly below zero. Our numerics thus imply that random circuits are the best possible noisy encoders since they can preserve information all the way up to this depth upper bound \footnote{Optimal, here, is with respect to the asymptotic relationship between depth and noise strength and does not take into account other parameters such as encoding rate.}.

The optimality of random circuits for protecting information is not unprecedented. Random encoding circuits have shown great promise given their desirable coding parameters \cite{brown2013short,brown2015decoupling,gullans2021quantum, hayden2008decoupling}. For example, Hayden \textit{et al.} demonstrated that encoding with Haar random unitaries achieves the quantum channel capacity \cite{hayden2008decoupling}. Furthermore, Gullans \textit{et al.} numerically showed that even $O(\log n)$-depth random Clifford circuits are sufficient to saturate the capacity of the erasure channel~\cite{gullans2021quantum}. Intuitively, these random encoding circuits achieve such high performance due to their ability to scramble quantum information quickly \cite{brown2013scramblingspeedrandomquantum}. However, these results often assume that the logical information can be perfectly encoded before noise can act on the state. In reality, noise can act throughout the encoding circuit, which can destroy logical information before it is fully encoded. The challenge with this more realistic setting is that the fast-scrambling nature of random circuits not only encodes quantum information more quickly but also spreads errors faster throughout the circuit. At first glance, it is unclear which of these two competing effects wins out. Our results clarify this picture.

\section{Summary of Results}
In this work, we consider circuits that act on $n$ qubits where $k$ are considered to be logical qubits and the rest are ancilla qubits initialized to $\ket{0}^{n-k}$. The encoding rate is denoted by $r := k/n$. We first study brickwork circuits where 2-qubit Clifford gates are drawn randomly and applied to qubits on a 2D lattice. After each layer of gates, a round of depolarizing noise is applied. In order to characterize the ability to recover the original logical information, we then allow for a perfect syndrome measurement after the noisy circuit is applied. We numerically estimate the coherent information of this channel for various depths and noise rates, where the coherent information characterizes how much of the logical information is recoverable at the output of the channel and is a lower bound on the single-shot quantum channel capacity \cite{Lloyd_1997,devetak2004capacityquantumchannelsimultaneous}. Using an encoding rate of $r = 1/8$, our results show that when $p d$ is larger than approximately $0.6$, the coherent information of the channel approaches its minimum as the system size increases. On the other hand, when the product of depth and noise strength is below this threshold, then the coherent information instead approaches a finite positive value implying that the original logical information is partially recoverable.

Next, we consider a family of circuits where 2-qubit gates are chosen from the Haar measure and applied to disjoint random pairs of qubits in each layer. Then, a round of either depolarizing errors or amplitude damping errors are applied. As in the previous case, we are also interested in the coherent information of this channel as a function of noise strength and depth. However, Haar random circuits are not tractable to classically simulate so we must take a different approach to conduct the numerics. For this, we consider a standard approximation to the coherent information, which Gullans \textit{et al.} refer to as the log-average purity \cite{gullans2021quantum}. Although this is a heuristic approximation, it is expected that the fluctuations in the coherent information over the random circuits is small enough such that the log-average purity is a good approximation \cite{PhysRevX.7.031016,PhysRevX.8.021014,Zhou_2019}. This quantity can  be computed using standard calculations using the second moment operator of the Haar measure. Our numerics once again show a phase transition at a critical depth of $d^* = O(1/p)$.

Finally, we prove a rigorous upper bound on the coherent information for any noisy circuit (any gateset on any architecture). Informally, we prove the two following statements. First, we show that when $d>p^{-1}\log(1+1/r)$ then the coherent information, $I_c$, is strictly less than zero implying that the logical information is far from recoverable by the noisy circuit. Letting $r$ be a constant this bound shows that no noisy circuit can preserve quantum information for $d \geq d^*$ where $d^* =O(1/p)$, giving analytical justification for the above-threshold behavior observed in our numerics. Second, we show that if $r>(2e^p - 1)^{-1}$, then the coherent information is also bounded below zero. This gives an upper bound on how high the encoding rate can be before it is impossible to recover the information after applying the noisy channel.

\section{Background}
\label{sec:background}
A quantum error-correcting code maps a smaller $2^k$-dimensional logical Hilbert space into a larger $2^n$-dimensional physical Hilbert space. Any unitary on $n$ qubits can be interpreted as such a mapping by allowing $k$ of the $n$ input qubits to represent the logical state and treating the rest of the input qubits as ancillas. This map is depicted in \cref{fig:code}. For instance, if the encoding unitary $U$ is a Haar random unitary, then the codespace is a Haar random subspace. 

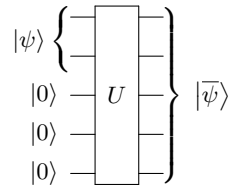
\begin{figure}
\centering
\mbox{
\Qcircuit @C=1em @R=.7em 
{\lstick{} & \multigate{4}{U} &\qw \inputgroupv{1}{2}{.8em}{1em}{\ket{\psi}}\\
\lstick{} & \ghost{U} &\qw\\
\lstick{\ket{0}} & \ghost{U} &\qw && \ket{\overline{\psi}}\\
\lstick{\ket{0}} & \ghost{U} &\qw \\
\lstick{\ket{0}} & \ghost{U} &\qw \gategroup{5}{3}{1}{3}{.7em}{\}}\\
}
}
\caption{Depiction of a general encoding circuit. In this case $\ket{\psi}$ is the $k$-qubit unencoded logical state. $\ket{\overline{\psi}}$ is the $n$-qubit encoded logical state.}
\label{fig:code}
\end{figure}

When these encoding circuits are subject to mid-circuit noise, the logical information can be partially destroyed. To quantify the error-protecting capability of these encoding circuits under mid-circuit noise, we study their coherent information. We closely follow \cite{preskill} and adopt their notation for consistency. We first define the encoding channel, which maps states in the logical Hilbert space $\mathcal H_A$ to states in the physical Hilbert space $\mathcal H_B$, as follows:
\begin{align}
    \mathcal E^{A \rightarrow B}(\rho) = \tilde{\mathcal C}(\rho \otimes\ketbra{0}^{\otimes n-k})
\end{align}
where $\tilde{\mathcal C}$ denotes the noisy circuit. To evaluate the noisy circuit's ability to transmit logical information, we would like to understand whether it is feasible to recover the original logical codestate--in particular, does there exist a decoding channel $\mathcal D^{B\rightarrow A}$ such that for any $\ket{\psi}$, $\mathcal D \circ \mathcal E(\ketbra{\psi}) = \ketbra{\psi}$. This can be quantified by first introducing a reference system $R$ and considering the maximally entangled state between $R$ and the logical qubits $A$ denoted as follows

\begin{align}
    \ket{\Phi}_{RA} = 2^{-k/2}\sum_{i=1}^{2^k} \ket{i,i},
\end{align} 
where $2^k$ is the dimension of $\mathcal H_A$. For notational purposes we also denote the density matrix of the maximally entangled state as $\Phi_{RA} := \ketbra{\Phi_{RA}}$. The quantity of interest is then the coherent information of the noisy encoding channel $\mathcal E$ acting on the logical qubits of this maximally entangled state, which we denote as $I_c(\eye/2^k,\mathcal{E})$, where $\eye/2^k$ is the reduced state of $\Phi_{RA}$ on $A$. Before formally defining the coherent information, it will be helpful to consider the Stinespring dilation of $\mathcal E^{A \rightarrow B}$ and denote it as $U^{A \rightarrow BE}$ where $E$ represents the environment. Finally, the output state of this dilation is denoted by $\ket{\phi}_{RBE}$. This purified model of our setting is depicted  in \cref{fig:stine}.
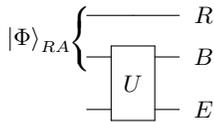
\begin{figure}
\centering
\mbox{
\Qcircuit @C=1em @R=1.25em 
{
\lstick{} & \qw & \qw & R
\inputgroupv{1}{2}{.6em}{1em}{\ket{\Phi}_{RA}\text{      }}\\
\lstick{} & \multigate{1}{U} & \qw & B\\
\lstick{} & \ghost{U} & \qw & E \\
}}
\caption{Diagram for the Stinespring dilation of the noisy circuit acting on one half of the maximally entangled state. Importantly, the coherent information of this operation characterizes how well quantum information can be recovered from the output of the noisy circuit.}
\label{fig:stine}
\end{figure}
With this setup in place, we can now define the coherent information in terms of entropies of reduced density matrices of $\ketbra{\phi}_{RBE}$: 
\begin{align}
    I_c(\eye/2^k, \mathcal{E}) = S(B) - S(E),
\end{align}
which we abbreviate by $I_c$.

First, note that this quantity is essentially equivalent to the quantum mutual information between $R$ and $B$ with respect to $\ketbra{\phi}_{RBE}$:
\begin{align}
    S(B) - S(E) &= S(B) - S(RB)\\
    &= I(R:B) - S(R) \\
    &= I(R:B) - k
\end{align}
Therefore, it is exactly the mutual information between the reference system $R$ and the physical qubits $B$, shifted by $k$.
Since $0 \leq I(R:B) \leq 2k$, we have that $-k \leq I_c \leq k$. It can be shown that if the upper bound is saturated then any logical codestate can be perfectly recovered from the noisy channel. The converse is also true in that if the purification of the maximally entangled state can be recovered, then $I_c = k$. On the other hand, it can also be shown that when $I_c = -k$ no logical information can be transmitted. Thus, the coherent information is a useful quantity for understanding how well logical information can be preserved under the noisy encoding channel. For self-containment, we present a proof sketch for these statements in \cref{app:omittedproofs}, which continues to follow John Preskill's notes \cite{preskill}. 

Another motivation for studying the coherent information is that it is closely related to the quantum channel capacity by the formula \cite{Lloyd_1997,devetak2004capacityquantumchannelsimultaneous}:
\begin{align}
\mathcal Q(\mathcal N) = \lim_{N \rightarrow \infty}\frac{1}{N} \max_\rho I_c(\rho,\mathcal N^{\otimes N}).
\end{align}

\section{2D Random Clifford circuits}
\label{sec:clifford}

We begin by studying random 2D Clifford circuits with mid-circuit depolarizing noise after every gate, where the single-qubit depolarizing channel is defined as follows:
\begin{align}
    \mathcal D_p(\rho)  = (1-p) \rho + \frac{p}{3} X \rho X^{\dagger} + \frac{p}{3} Y \rho Y^{\dagger} + \frac{p}{3} Z \rho Z^{\dagger} \label{eq:depolarizing}
\end{align}

It can be seen that these 2D Clifford circuits encode input states into stabilizer codes as follows. First,  we refer to the ancilla qubits as ``stabilizer input qubits''. Since these are initialized to $\ket{0}^{\otimes n-k}$ they are stabilized by the generators $Z_i$ where $i$ indexes each input qubit. Assuming there is no noise, the output of the circuit will be stabilized by $UZ_iU^\dagger$ where $U$ represents the Clifford circuit. Thus, the stabilizer code is defined by the generators $UZ_iU^\dagger$ for each stabilizer input qubit $i$.

\paragraph{Computational Method.}

As discussed in the previous section, our goal will be to estimate the average coherent information of noisy random Clifford circuits acting on half of the maximally entangled state. For the case of Clifford gates, we can calculate this efficiently since the maximally entangled state is a stabilizer state (it is the tensor product of $k$ Bell pairs which can each be prepared using Hadamards and CNOT gates), and so the entire procedure can be simulated using the Gottesman-Knill theorem extended to mixed states \cite{gottesman1998heisenbergrepresentationquantumcomputers,Aaronson_2004}. After applying the noisy circuit we additionally allow for a perfect syndrome measurement of the stabilizer generators, which projects the state back onto the codespace. Finally, the coherent information is calculated by computing the entropy of the output state of this protocol. An example of this entire procedure is depicted in \cref{fig:CI}.

\begin{figure}
\centering
\mbox{
\Qcircuit @C=1em @R=.7em 
{&& \lstick{\text{R     } \ket{0}} & \gate{H} & \ctrl{1} & \qw & \qw & \qw  & \qw\\
&&\lstick{\ket{0}} & \qw  &\targ & \qw & \multigate{3}{\tilde{\mathcal C}} & \multigate{3}{\mathcal S} & \qw \\
&&\lstick{\ket{0}} & \qw & \qw & \qw & \ghost{\mathcal E} & \ghost{\mathcal S}  & \qw &&& \lstick{I_c}\\
&&\lstick{\ket{0}} & \qw & \qw & \qw & \ghost{\mathcal E} & \ghost{\mathcal S}  & \qw \\
&&\lstick{\ket{0}} & \qw & \qw & \qw & \ghost{\mathcal E} & \ghost{\mathcal S}  & \qw  \gategroup{5}{9}{1}{9}{.7em}{\}}
}}
\caption{Diagram of our numerical simulation. Qubit 1 represents the reference qubit and is prepared in a Bell pair with qubit 2, which represents the logical qubit (here, there is only one logical qubit and so $k=1$). Qubits 3-5 represent the ancilla, or stabilizer, qubits. The noisy Clifford circuit is applied to the data qubits (qubits 2-5) which is denoted by $\tilde{\mathcal C}$. Next, the stabilizer generators of the code are measured perfectly which is denoted by $\mathcal S$. Finally, the coherent information of this channel is calculated.}
\label{fig:CI}
\end{figure}
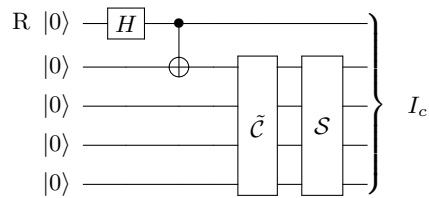

\paragraph{Numerical Setup.}

For each depth between depth $10$ and $24$, the simulation is repeated while sweeping over a wide range of error rates. In addition, the system size is varied as $n=16,24,144,256,$ and $400$. Furthermore, a code rate of $1/8$ is chosen, and logical qubits are placed evenly throughout the 2D grid. For each choice of depth, error rate, and system size, we collect $1000$ samples and compute the average coherent information. 

\paragraph{Results.}

It is evident from \cref{fig:clifford}a that there is a clear crossing point where the coherent information for each system size all converge to zero at a critical error rate. In particular, it can be seen that before this critical error rate, the coherent information is positive and continues to increase with larger system size. Therefore, this parameter regime is in a phase in which error correction is possible. However, after the critical error rate, the coherent information is negative and increasing the system size leads to even lower coherent information. This indicates that the circuit is now in a phase in which error correction is no longer possible. To further investigate this phase transition, a scaling collapse is plotted which uses the following dimensionless scaling ansatz, whose form is motivated by the statistical mechanics of phase transitions \cite{RevModPhys.70.653}:
\begin{align}
    I_c = f(n^\lambda(p-p^*))
\end{align}
We can estimate the critical exponent $\lambda$ and the critical error rate $p^*$ by approximating $f(x)$ as $f(x) = A+Bx+Cx^2$ using its Taylor expansion around $x=0$. Fitting these five parameters, $A$, $B$, $C$, $p^*$, and $\lambda$, to the data results in an estimate for $p^*$ and $\lambda$, which can then be used to plot $I_c$ with respect to our dimensionless ansatz. From \cref{fig:clifford}b it is apparent that our scaling ansatz faithfully parameterizes the behavior of $I_c$ since all of the data points fall on a single straight line regardless of the system size.
\begin{figure}
    \centering
    \includegraphics[width=\linewidth]{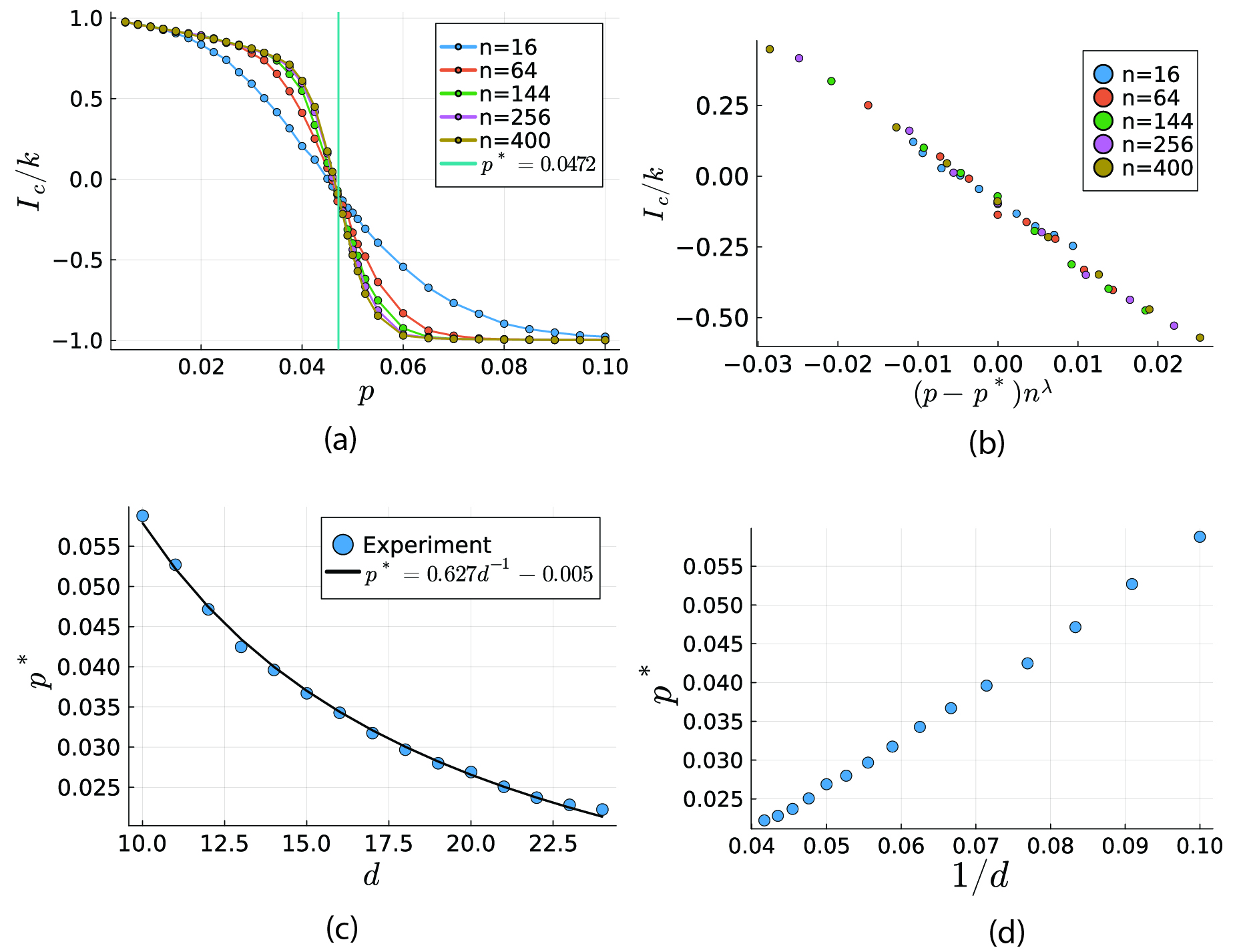}
\caption{Numerical results for 2D random brickwork Clifford circuits with periodic boundary conditions subjected to mid-circuit depolarizing noise. In all cases an encoding rate of $r = 1/8$ is used. (a) Coherent information per logical qubit is displayed with respect to a range of error rates $p$ between $0.005$ and $0.1$. System size is scaled from $n=16$ and $n=400$. The circuit depth is fixed at $d=12$. The critical error rate is depicted by the vertical line which occurs at $p^* = 0.0472$. (b) A scaling collapse for the data in (a) is plotted for error rates near the critical point. The scaling ansatz uses the estimated critical error rate $p^* = 0.0472$ and critical exponent $\lambda = 0.616$. (c) The plots in (a) and (b) are repeated for circuit depths $d$ ranging from $10$ to $24$. Linear regression on the inverse depth is used to estimate the relationship that $p^* = 0.627d^{-1} -0.005$. (d) Finally, the critical error rate $p^*$ is plotted with respect to the inverse depth.}
\label{fig:clifford}
\end{figure}

\section{All-to-All Haar Random circuits}
\label{sec:haar}
Next, we conduct similar numerics for the case of Haar random circuits with all-to-all connectivity. This is not feasible by direct simulation but can be done using a standard second moment approximation \cite{PhysRevX.7.031016,PhysRevX.8.021014,Zhou_2019}. 

\paragraph{Computational Method.} In the definition of coherent information, we first replace the von Neumann entropy with the Renyi-2 entropy: $S_2(A) = -\log \Tr \rho_A^2$. This gives:

\begin{align}
    I_{c,2} &= S_2(B) - S_2(RB) \\
    &= -\log(\Tr \rho_B^2) + \log(\Tr \rho^2),
\end{align}
where $\rho = (\mathrm{Id}_R \otimes \mathcal E)(\ketbra{\Phi_{RA}})$ represents the marginal state on $RB$ of $\ket{\phi_{RBE}}$.
Next, we average over the random circuit instances and move the expectation over circuit instances $\mc C$ inside of the log.
\begin{align}
    \E_{\mathcal C} I_{c,2} = -\E_{\mathcal C}\log(\Tr \rho_B^2) + \E_{\mathcal C} \log(\Tr \rho^2) \\
    \label{eq:Ip}
    \rightarrow I_p = -\log( \E_{\mathcal C} \Tr \rho_B^2) + \log( \E_{\mathcal C} \Tr\rho^2),
\end{align}
Notice that this expression is now a function of second moment operators of the Haar measure, which are tractable to compute in the special case where the gates have all-to-all connectivity. In addition, it is possible to perform these calculations for any single-qubit noise channels and so we present results for both depolarizing noise and amplitude damping noise. The single-qubit depolarizing channel is defined in \cref{eq:depolarizing}. The amplitude damping channel is defined as follows:
\begin{align}
    \mathcal A_p(\rho)= A_0\rho A_0^\dagger + A_1 \rho A_1^\dagger \label{eq:damping}
\end{align}
where 
\begin{align}
    A_0 = \begin{bmatrix}
        1 & 0 \\
        0 & \sqrt{1-p}
    \end{bmatrix}
\end{align}
and 
\begin{align}
    A_1 = \begin{bmatrix}
        0 & \sqrt{p} \\
        0 & 0
    \end{bmatrix}
\end{align}
To follow the full calculation for both choices of noise see \cref{app:purity}.

\paragraph{Numerical Setup.}

Once again, we perform the numerics over a wide range of depths and noise strengths. As in the Clifford case, we provide a plot of the scaling collapse with respect to a dimensionless scaling ansatz. In our numerics, we again consider a rate of $r = 1/8$ and consider system sizes of $n=10$ to $n=50$. 

\paragraph{Results.}
We present the results for the case of depolarizing noise in \cref{fig:haardepol} and the results for amplitude damping noise in \cref{fig:haaramp}. When depth is fixed it can be seen that the coherent information once again undergoes a phase transition as the error rate is scaled. Note that the same would be true if we instead fixed the error rate and scaled the depth. In the case of depolarizing noise, the critical point occurs at approximately $p^*d^* \sim 0.39$ while in the amplitude damping case, the critical point occurs at $p^*d^* \sim 0.74$. This suggests that the circuits are more robust to amplitude damping noise than depolarizing. In fact, for a given error rate, information can be preserved for almost twice as long for amplitude damping noise as opposed to depolarizing noise. Interestingly, both noise models have similar critical exponents: $\lambda = 0.467$ in the depolarizing case and $\lambda = 0.525$ in the amplitude damping case. Comparing the 2D numerics in the prior section to the all-to-all case, it is worthwhile to note that for depolarizing noise the 2D circuits have a higher critical point of $p^*d^* \sim 0.627$ compared to $p^*d^* \sim 0.39$ in the all-to-all case. Thus, the extra degrees of interaction in the all-to-all case seems to have spread the noise more quickly rather than encode the information faster. The 2D case also results in a higher critical exponent of $\lambda = 0.616$.
\begin{figure}
    \centering
    \includegraphics[width=\linewidth]{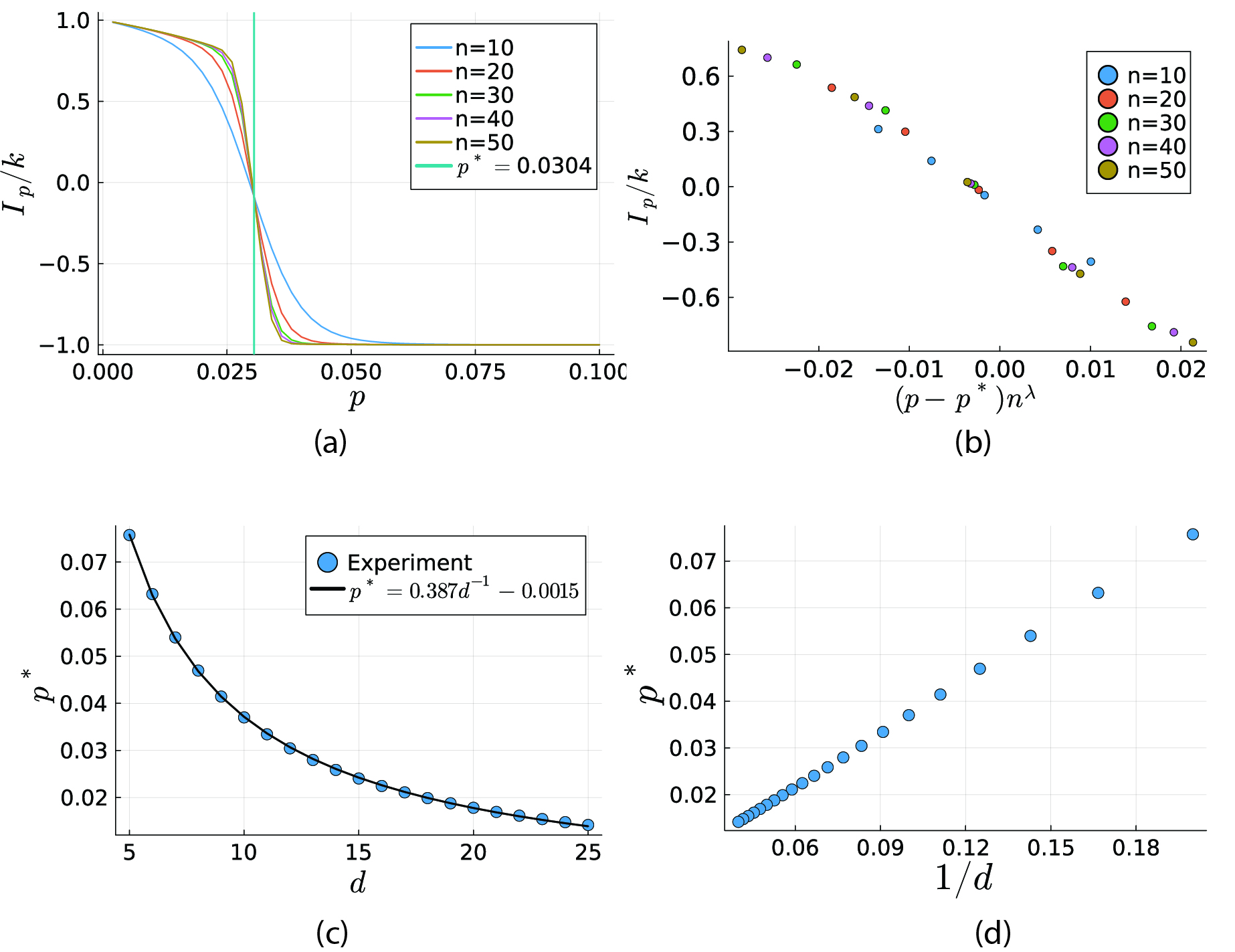}
\caption{Numerical results for all-to-all Haar random circuits subjected to mid-circuit depolarizing noise. In all cases an encoding rate of $r = 1/8$ is used. (a) Log-average purity per logical qubit is displayed with respect to a range of error rates $p$ between $0.002$ and $0.1$. System size is scaled from $n=10$ and $n=50$. The circuit depth is fixed at $d=12$. The critical error rate is depicted by the vertical line which occurs at $p^* = 0.0304$. (b) A scaling collapse for the data in (a) is plotted for error rates near the critical point. The scaling ansatz uses the estimated critical error rate $p^* = 0.0304$ and critical exponent $\lambda = 0.467$. (c) The plots in (a) and (b) are repeated for circuit depths $d$ ranging from $5$ to $25$. Linear regression on the inverse depth is used to estimate the relationship that $p^* = 0.387d^{-1} -0.0015$. (d) Finally, the critical error rate $p^*$ is plotted with respect to the inverse depth.}
\label{fig:haardepol}
\end{figure}

\begin{figure}
    \centering
    \includegraphics[width=\linewidth]{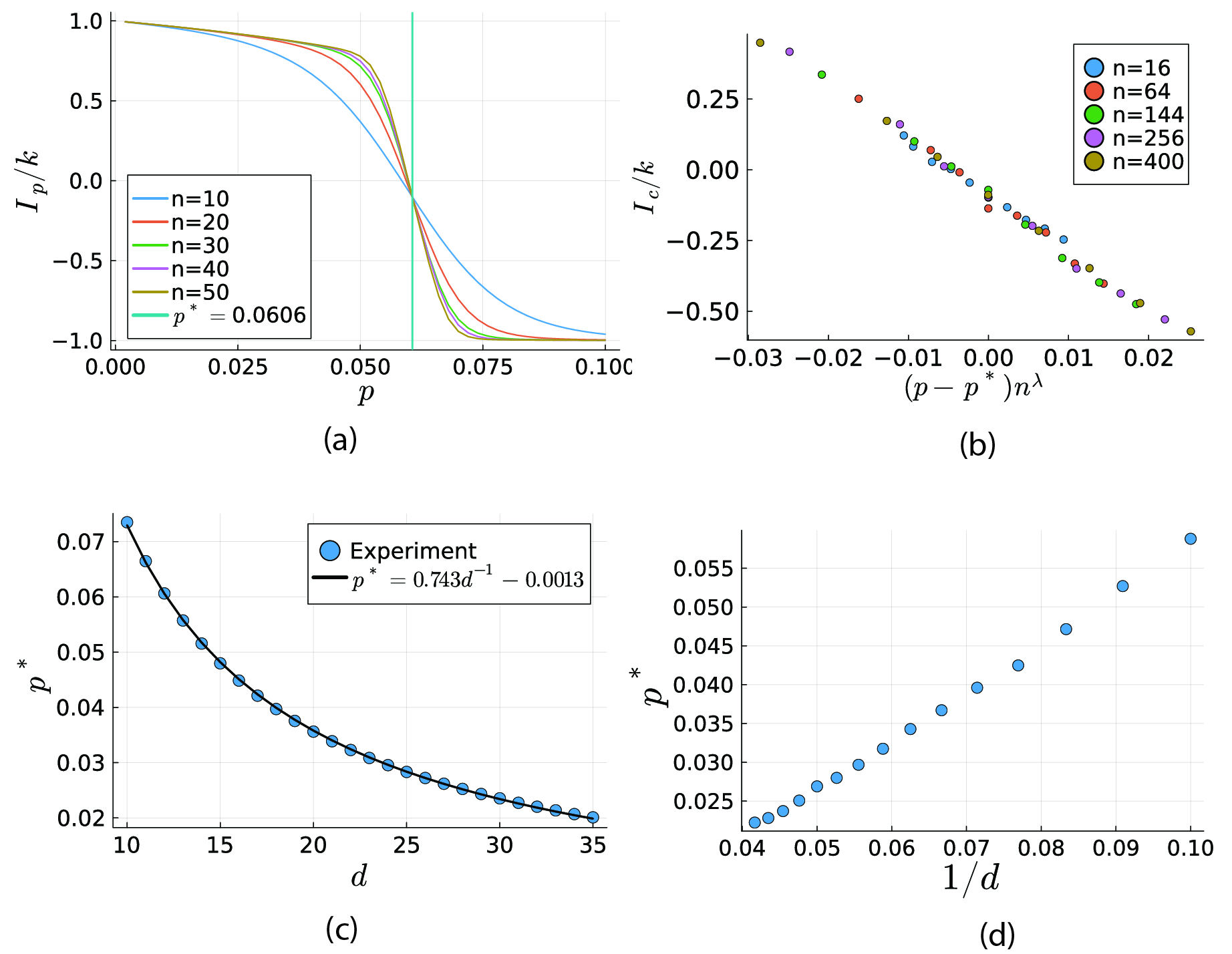}
\caption{Numerical results for all-to-all Haar random circuits subjected to mid-circuit amplitude damping noise. In all cases an encoding rate of $r = 1/8$ is used. (a) Log-average purity per logical qubit is displayed with respect to a range of error rates $p$ between $0.002$ and $0.1$. System size is scaled from $n=10$ and $n=50$. The circuit depth is fixed at $d=12$. The critical error rate is depicted by the vertical line which occurs at $p^* = 0.0606$. (b) A scaling collapse for the data in (a) is plotted for error rates near the critical point. The scaling ansatz uses the estimated critical error rate $p^* = 0.0606$ and critical exponent $\lambda = 0.525$. (c) The plots in (a) and (b) are repeated for circuit depths $d$ ranging from $10$ to $35$. Linear regression on the inverse depth is used to estimate the relationship that $p^* = 0.743d^{-1} -0.0013$. (d) Finally, the critical error rate $p^*$ is plotted with respect to the inverse depth.}
\label{fig:haaramp}
\end{figure}

\section{Worst-case circuits}
\label{sec:worstcase}
In order to support these numerical results, we analytically prove a rigorous upper bound on the coherent information of \textit{any} noisy circuit, which asymptotically matches our numerics for noisy random circuits. Note these bounds apply to circuits on any architecture and even allow for arbitrary non-local gates. This suggests that noisy random circuits achieve the optimal asymptotic trade-off between error rate and depth.
\begin{theorem} \label{theorem:worst-case}
    Let $\tilde{\mc C}$ be an arbitrary quantum circuit acting on $n$ qubits, which has $d$ layers of interspersed depolarizing noise of strength $p$ acting independently on each qubit. Then the coherent information of the encoding channel which encodes a $k$-qubit quantum state $\zeta$ as $\mc E(\zeta) = \tilde{C}(\zeta \otimes \ketbra{0}^{n-k})$, is bounded as follows:
    \begin{align}
        I_c(\frac{\eye}{2^k}, \mc E) \leq e^{-p d}(n+k)-k
    \end{align}
\end{theorem}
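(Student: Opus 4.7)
The plan is to reduce the theorem to a lower bound on the joint entropy $S(\rho_{RB})$ at the output of the noisy circuit and then combine it with the trivial bound $S(B) \leq n$. Setting $\eta := 4p/3$ so that $\mc D_p(\rho) = (1-\eta)\rho + \eta(\Id/2)$, I aim to prove by induction on the number of noise layers that
\begin{equation}
S\bigl(\rho_{RB}^{(d)}\bigr) \geq (n+k)\bigl(1-(1-\eta)^d\bigr).
\end{equation}
Since unitaries preserve $S(RB)$ and $\rho_R = \Id/2^k$ is fixed throughout (the noise acts trivially on $R$), only the noise layers enter the recursion. The theorem then follows from
\[
I_c = S(B) - S(RB) \leq n - (n+k)\bigl(1-(1-\eta)^d\bigr) = (n+k)(1-\eta)^d - k,
\]
together with $(1-\eta)^d \leq e^{-\eta d} \leq e^{-pd}$.

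The inductive step reduces to a single-layer entropy-production estimate: for any $\rho_{RB}$ with $S(R) = k$,
\begin{equation}
S\bigl((\Id_R \otimes \mc D_p^{\otimes n})(\rho_{RB})\bigr) \geq (1-\eta)\,S(\rho_{RB}) + \eta(n+k).
\end{equation}
To prove this, I would write the depolarizing layer as the convex mixture $\mc D_p^{\otimes n} = \sum_{S \subseteq [n]} (1-\eta)^{n-|S|}\eta^{|S|}\,\Delta_S$, where $\Delta_S$ traces out the qubits in $S$ and replaces them with $\Id_{B_S}/2^{|S|}$. Concavity of the von Neumann entropy then yields
\begin{equation}
S\bigl(\rho_{RB}^{\text{new}}\bigr) \geq \E_S\bigl[S(\rho_{RB_{\bar S}})\bigr] + \eta n,
\end{equation}
where $S$ is drawn by independently placing each $i \in [n]$ in $S$ with probability $\eta$.

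It remains to establish the ``erasure entropy'' inequality $\E_S\bigl[S(\rho_{RB_{\bar S}})\bigr] \geq \eta S(R) + (1-\eta) S(RB)$. I would apply the chain rule in the ordering $R, B_1, \ldots, B_n$ to write
\[
S(\rho_{RB_{\bar S}}) = S(R) + \sum_{i \in \bar S} S\bigl(B_i \,\big|\, R B_{\bar S \cap \{<i\}}\bigr),
\]
then use strong subadditivity (``conditioning reduces entropy'') to bound each summand below by $S(B_i \mid R B_{<i})$. Taking expectations pulls a factor $\Pr[i \in \bar S] = 1-\eta$ out of each term, producing $(1-\eta)(S(RB) - S(R))$, which closes the argument.

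The main obstacle is the erasure entropy lemma, since individual quantum conditional entropies $S(B_i \mid R B_{<i})$ can be negative, so one cannot simply transport a classical argument via positivity. Fortunately, both the chain rule and strong subadditivity hold without positivity assumptions, and the expectation step is linear, so the proof goes through term by term without issue.
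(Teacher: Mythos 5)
Your proof is correct, but it takes a genuinely different route from the paper. The paper's proof is a three-line reduction: it writes $I_c \leq D(\rho_{RB}\,\|\,\rho_R \otimes \tfrac{\eye_B}{2^{|B|}}) - k$ via the identity $S(B|R) = n - D(\rho_{RB}\,\|\,\rho_R \otimes \tfrac{\eye_B}{2^{|B|}})$ and then black-boxes the decay of this relative entropy under interspersed depolarizing layers, citing Lemma 3.2 of \cite{JonJoel} for the bound $(1-p)^d(n+k)$. You instead prove the equivalent statement from scratch: since $D(\rho_{RB}\,\|\,\rho_R\otimes\tfrac{\eye_B}{2^{n}}) = n + k - S(RB)$ when $\rho_R$ is maximally mixed, your inductive lower bound $S(\rho_{RB}^{(d)}) \geq (n+k)(1-(1-\eta)^d)$ is precisely a self-contained derivation of that contraction in the special case needed here. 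The ingredients you use --- the erasure decomposition $\mc D_p^{\otimes n} = \sum_S (1-\eta)^{n-|S|}\eta^{|S|}\Delta_S$ with $\eta = 4p/3$, concavity of entropy, the chain rule, and strong subadditivity --- are all applied correctly, and your worry about negative conditional entropies is resolved exactly as you say: the term-by-term SSA inequality and linearity of expectation need no positivity. Your route buys two things: the proof is self-contained (no external lemma), and because you track the true erasure probability $\eta = 4p/3$ rather than $p$, you obtain the marginally stronger bound $I_c \leq e^{-4pd/3}(n+k) - k$. The paper's route is shorter and generalizes immediately to an arbitrary side system $Y$ (not just a maximally mixed reference), which is what the cited lemma provides.
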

We obtain this bound using relative entropy convergence under tensor products of depolarizing channels \cite{JonJoel,M_ller_Hermes_2016}. The proof is deferred to \cref{app:worstcaseproof}.

Let us denote the encoding rate as $r:=k/n$. This implies that when $d > p^{-1}\log(1+1/r)$, then $I_c < 0$. Assuming $r$ is some constant, this proves that the coherent information is negative and thus the logical information is unrecoverable for any noisy circuit after a depth threshold of $d^* = O(p^{-1})$. Furthermore, setting $d=1$ shows that $r < (e^p - 1)^{-1}$ in order for $I_c \geq 0$. This provides a bound on the largest allowable encoding rate that is still recoverable after any depth. 

It is worthwhile to sanity check this bound against known fault-tolerance results. In particular, Aharonov \textit{et al.}~\cite{aharonov1996limitationsnoisyreversiblecomputation} give a protocol (that is closely related to \cite{aharonov1999faulttolerantquantumcomputationconstant}) that is robust to errors and uses at least $k2^d$ ancilla qubits. This corresponds to a rate of $r \leq 2^{-d}$. Plugging this into our bound shows that the coherent information of their circuit becomes negative after a depth of $p^{-1}\log(1+1/r) \geq p^{-1}d>d$. Thus, their protocol uses just enough ancilla qubits to avoid our bounds.
\\
\section{Discussion}
Our results suggest several avenues for future work. For instance, one immediate goal is to prove the existence of our observed phase transition analytically. Similar results have been achieved for low-depth random-circuit codes in the code capacity setting \cite{liu2025approximatequantumerrorcorrection} and so it is possible that related techniques could be extended to our setting of random circuits subject to mid-circuit depolarizing noise.

Next, our work has tightly characterized the tradeoff between depth and error rate, but it remains to understand how encoding rate fits into this picture. For instance, up until what depth does the circuit preserve a vanishing encoding rate? It would also be interesting to investigate how our results change as the circuit connectivity is varied between 2D and all-to-all architectures (e.g. varying the degree of the connectivity graph). We hope to further expand our phase diagram to include these factors.

Finally, a constant-depth phase transition in quantum circuits with depolarizing noise has been previously observed in a different setting of computational complexity. In particular, the output distributions of noisy IQP and Clifford-magic circuits \cite{Rajakumar_2025,nelson2024polynomialtimeclassicalsimulationnoisy} can be efficiently classically sampled above a critical depth of approximately $d^* = \tilde{O}(1/p)$ (omitting log factors), whereas below this critical depth, the task of sampling is outside the polynomial hierarchy \cite{fujii_computational_2016}, indicating there is no efficient classical algorithm. These prior works exploit a noise percolation phenomenon beyond the critical depth, which cuts the output state into many small, unentangled pieces. Our work shows a similar phenomenon: namely, we show an inability to effectively encode and protect quantum information past the critical depth threshold, which also implies a breakdown of long-range entanglement. It would be interesting to see whether this connection could be made more rigorous. 
\\
\section*{Acknowledgements}
This material is based upon work supported by the U.S. Department of Energy, Office of Science, Accelerated Research in Quantum Computing, Fundamental Algorithmic Research toward Quantum Utility (FAR-Qu). We thank Dominik Hangleiter, Zhi-Yuan Wei, Daniel Malz and Alexey Gorshkov for helpful discussions.
This material is based upon work supported in part by the NSF QLCI award OMA2120757. This work was performed in
part at the Kavli Institute for Theoretical Physics (KITP), which is supported by grant NSF PHY-2309135. JN is supported by the National Science Foundation Graduate Research Fellowship Program under Grant No. DGE 2236417.

\bibliography{Bibliography.bib, doms_refs.bib}

\begin{thebibliography}{34}%
\makeatletter
\providecommand \@ifxundefined [1]{%
 \@ifx{#1\undefined}
}%
\providecommand \@ifnum [1]{%
 \ifnum #1\expandafter \@firstoftwo
 \else \expandafter \@secondoftwo
 \fi
}%
\providecommand \@ifx [1]{%
 \ifx #1\expandafter \@firstoftwo
 \else \expandafter \@secondoftwo
 \fi
}%
\providecommand \natexlab [1]{#1}%
\providecommand \enquote  [1]{``#1''}%
\providecommand \bibnamefont  [1]{#1}%
\providecommand \bibfnamefont [1]{#1}%
\providecommand \citenamefont [1]{#1}%
\providecommand \href@noop [0]{\@secondoftwo}%
\providecommand \href [0]{\begingroup \@sanitize@url \@href}%
\providecommand \@href[1]{\@@startlink{#1}\@@href}%
\providecommand \@@href[1]{\endgroup#1\@@endlink}%
\providecommand \@sanitize@url [0]{\catcode `\\12\catcode `\$12\catcode `\&12\catcode `\#12\catcode `\^12\catcode `\_12\catcode `\%12\relax}%
\providecommand \@@startlink[1]{}%
\providecommand \@@endlink[0]{}%
\providecommand \url  [0]{\begingroup\@sanitize@url \@url }%
\providecommand \@url [1]{\endgroup\@href {#1}{\urlprefix }}%
\providecommand \urlprefix  [0]{URL }%
\providecommand \Eprint [0]{\href }%
\providecommand \doibase [0]{http://dx.doi.org/}%
\providecommand \selectlanguage [0]{\@gobble}%
\providecommand \bibinfo  [0]{\@secondoftwo}%
\providecommand \bibfield  [0]{\@secondoftwo}%
\providecommand \translation [1]{[#1]}%
\providecommand \BibitemOpen [0]{}%
\providecommand \bibitemStop [0]{}%
\providecommand \bibitemNoStop [0]{.\EOS\space}%
\providecommand \EOS [0]{\spacefactor3000\relax}%
\providecommand \BibitemShut  [1]{\csname bibitem#1\endcsname}%
\let\auto@bib@innerbib\@empty
\bibitem [{\citenamefont {Li}\ \emph {et~al.}(2019)\citenamefont {Li}, \citenamefont {Chen},\ and\ \citenamefont {Fisher}}]{PhysRevB.100.134306}%
  \BibitemOpen
  \bibfield  {author} {\bibinfo {author} {\bibfnamefont {Y.}~\bibnamefont {Li}}, \bibinfo {author} {\bibfnamefont {X.}~\bibnamefont {Chen}}, \ and\ \bibinfo {author} {\bibfnamefont {M.~P.~A.}\ \bibnamefont {Fisher}},\ }\href {\doibase 10.1103/PhysRevB.100.134306} {\bibfield  {journal} {\bibinfo  {journal} {Phys. Rev. B}\ }\textbf {\bibinfo {volume} {100}},\ \bibinfo {pages} {134306} (\bibinfo {year} {2019})}\BibitemShut {NoStop}%
\bibitem [{\citenamefont {Skinner}\ \emph {et~al.}(2019)\citenamefont {Skinner}, \citenamefont {Ruhman},\ and\ \citenamefont {Nahum}}]{PhysRevX.9.031009}%
  \BibitemOpen
  \bibfield  {author} {\bibinfo {author} {\bibfnamefont {B.}~\bibnamefont {Skinner}}, \bibinfo {author} {\bibfnamefont {J.}~\bibnamefont {Ruhman}}, \ and\ \bibinfo {author} {\bibfnamefont {A.}~\bibnamefont {Nahum}},\ }\href {\doibase 10.1103/PhysRevX.9.031009} {\bibfield  {journal} {\bibinfo  {journal} {Phys. Rev. X}\ }\textbf {\bibinfo {volume} {9}},\ \bibinfo {pages} {031009} (\bibinfo {year} {2019})}\BibitemShut {NoStop}%
\bibitem [{\citenamefont {Li}\ \emph {et~al.}(2018)\citenamefont {Li}, \citenamefont {Chen},\ and\ \citenamefont {Fisher}}]{PhysRevB.98.205136}%
  \BibitemOpen
  \bibfield  {author} {\bibinfo {author} {\bibfnamefont {Y.}~\bibnamefont {Li}}, \bibinfo {author} {\bibfnamefont {X.}~\bibnamefont {Chen}}, \ and\ \bibinfo {author} {\bibfnamefont {M.~P.~A.}\ \bibnamefont {Fisher}},\ }\href {\doibase 10.1103/PhysRevB.98.205136} {\bibfield  {journal} {\bibinfo  {journal} {Phys. Rev. B}\ }\textbf {\bibinfo {volume} {98}},\ \bibinfo {pages} {205136} (\bibinfo {year} {2018})}\BibitemShut {NoStop}%
\bibitem [{\citenamefont {Gullans}\ and\ \citenamefont {Huse}(2020)}]{Gullans_2020}%
  \BibitemOpen
  \bibfield  {author} {\bibinfo {author} {\bibfnamefont {M.~J.}\ \bibnamefont {Gullans}}\ and\ \bibinfo {author} {\bibfnamefont {D.~A.}\ \bibnamefont {Huse}},\ }\href {\doibase 10.1103/physrevx.10.041020} {\bibfield  {journal} {\bibinfo  {journal} {Physical Review X}\ }\textbf {\bibinfo {volume} {10}} (\bibinfo {year} {2020}),\ 10.1103/physrevx.10.041020}\BibitemShut {NoStop}%
\bibitem [{\citenamefont {Weinstein}\ \emph {et~al.}(2023)\citenamefont {Weinstein}, \citenamefont {Kelly}, \citenamefont {Marino},\ and\ \citenamefont {Altman}}]{Weinstein_2023}%
  \BibitemOpen
  \bibfield  {author} {\bibinfo {author} {\bibfnamefont {Z.}~\bibnamefont {Weinstein}}, \bibinfo {author} {\bibfnamefont {S.~P.}\ \bibnamefont {Kelly}}, \bibinfo {author} {\bibfnamefont {J.}~\bibnamefont {Marino}}, \ and\ \bibinfo {author} {\bibfnamefont {E.}~\bibnamefont {Altman}},\ }\href {\doibase 10.1103/physrevlett.131.220404} {\bibfield  {journal} {\bibinfo  {journal} {Physical Review Letters}\ }\textbf {\bibinfo {volume} {131}} (\bibinfo {year} {2023}),\ 10.1103/physrevlett.131.220404}\BibitemShut {NoStop}%
\bibitem [{\citenamefont {Qian}\ and\ \citenamefont {Wang}(2024)}]{qian2024coherentinformationphasetransition}%
  \BibitemOpen
  \bibfield  {author} {\bibinfo {author} {\bibfnamefont {D.}~\bibnamefont {Qian}}\ and\ \bibinfo {author} {\bibfnamefont {J.}~\bibnamefont {Wang}},\ }\href {https://arxiv.org/abs/2408.16267} {\enquote {\bibinfo {title} {Coherent information phase transition in a noisy quantum circuit},}\ } (\bibinfo {year} {2024}),\ \Eprint {http://arxiv.org/abs/2408.16267} {arXiv:2408.16267 [quant-ph]} \BibitemShut {NoStop}%
\bibitem [{\citenamefont {Nelson}\ \emph {et~al.}(2025{\natexlab{a}})\citenamefont {Nelson}, \citenamefont {Bentsen}, \citenamefont {Flammia},\ and\ \citenamefont {Gullans}}]{nelsonrandomcircuitcodes}%
  \BibitemOpen
  \bibfield  {author} {\bibinfo {author} {\bibfnamefont {J.}~\bibnamefont {Nelson}}, \bibinfo {author} {\bibfnamefont {G.}~\bibnamefont {Bentsen}}, \bibinfo {author} {\bibfnamefont {S.~T.}\ \bibnamefont {Flammia}}, \ and\ \bibinfo {author} {\bibfnamefont {M.~J.}\ \bibnamefont {Gullans}},\ }\href {\doibase 10.1103/PhysRevResearch.7.013040} {\bibfield  {journal} {\bibinfo  {journal} {Phys. Rev. Res.}\ }\textbf {\bibinfo {volume} {7}},\ \bibinfo {pages} {013040} (\bibinfo {year} {2025}{\natexlab{a}})}\BibitemShut {NoStop}%
\bibitem [{Note1()}]{Note1}%
  \BibitemOpen
  \bibinfo {note} {In this paper, we adopt the standard asymptotic notation $O$, $\Omega $, $\Theta $, $o$, and $\omega $ to describe relationships between the growth rates of two nonnegative functions $f(n)$ and $g(n)$ as $n \to \infty $. We write $f = O(g)$ when $\lim _{n \to \infty } f(n)/g(n) < \infty $, which is equivalent to stating that $g = \Omega (f)$. If $\lim _{n \to \infty } f(n)/g(n) = 0$, we denote this by $f = o(g)$, or equivalently $g = \omega (f)$. The notation $f = \Theta (g)$ indicates that both $f = O(g)$ and $f = \Omega (g)$ are satisfied. When a tilde is used, as in $\protect \tilde {\Theta }$, it signifies that multiplicative polylogarithmic factors in $n$ are omitted from the scaling.}\BibitemShut {Stop}%
\bibitem [{\citenamefont {Müller-Hermes}\ \emph {et~al.}(2016)\citenamefont {Müller-Hermes}, \citenamefont {Stilck~França},\ and\ \citenamefont {Wolf}}]{M_ller_Hermes_2016}%
  \BibitemOpen
  \bibfield  {author} {\bibinfo {author} {\bibfnamefont {A.}~\bibnamefont {Müller-Hermes}}, \bibinfo {author} {\bibfnamefont {D.}~\bibnamefont {Stilck~França}}, \ and\ \bibinfo {author} {\bibfnamefont {M.~M.}\ \bibnamefont {Wolf}},\ }\href {\doibase 10.1063/1.4939560} {\bibfield  {journal} {\bibinfo  {journal} {Journal of Mathematical Physics}\ }\textbf {\bibinfo {volume} {57}} (\bibinfo {year} {2016}),\ 10.1063/1.4939560}\BibitemShut {NoStop}%
\bibitem [{\citenamefont {Aharonov}\ \emph {et~al.}(1996)\citenamefont {Aharonov}, \citenamefont {Ben-Or}, \citenamefont {Impagliazzo},\ and\ \citenamefont {Nisan}}]{aharonov1996limitationsnoisyreversiblecomputation}%
  \BibitemOpen
  \bibfield  {author} {\bibinfo {author} {\bibfnamefont {D.}~\bibnamefont {Aharonov}}, \bibinfo {author} {\bibfnamefont {M.}~\bibnamefont {Ben-Or}}, \bibinfo {author} {\bibfnamefont {R.}~\bibnamefont {Impagliazzo}}, \ and\ \bibinfo {author} {\bibfnamefont {N.}~\bibnamefont {Nisan}},\ }\href {https://arxiv.org/abs/quant-ph/9611028} {\enquote {\bibinfo {title} {Limitations of noisy reversible computation},}\ } (\bibinfo {year} {1996}),\ \Eprint {http://arxiv.org/abs/quant-ph/9611028} {arXiv:quant-ph/9611028 [quant-ph]} \BibitemShut {NoStop}%
\bibitem [{Note2()}]{Note2}%
  \BibitemOpen
  \bibinfo {note} {Optimal, here, is with respect to the asymptotic relationship between depth and noise strength and does not take into account other parameters such as encoding rate.}\BibitemShut {Stop}%
\bibitem [{\citenamefont {Brown}\ and\ \citenamefont {Fawzi}(2013{\natexlab{a}})}]{brown2013short}%
  \BibitemOpen
  \bibfield  {author} {\bibinfo {author} {\bibfnamefont {W.}~\bibnamefont {Brown}}\ and\ \bibinfo {author} {\bibfnamefont {O.}~\bibnamefont {Fawzi}},\ }in\ \href {\doibase 10.1109/ISIT.2013.6620245} {\emph {\bibinfo {booktitle} {2013 IEEE International Symposium on Information Theory}}}\ (\bibinfo {year} {2013})\ pp.\ \bibinfo {pages} {346--350}\BibitemShut {NoStop}%
\bibitem [{\citenamefont {Brown}\ and\ \citenamefont {Fawzi}(2015)}]{brown2015decoupling}%
  \BibitemOpen
  \bibfield  {author} {\bibinfo {author} {\bibfnamefont {W.}~\bibnamefont {Brown}}\ and\ \bibinfo {author} {\bibfnamefont {O.}~\bibnamefont {Fawzi}},\ }\href@noop {} {\bibfield  {journal} {\bibinfo  {journal} {Communications in mathematical physics}\ }\textbf {\bibinfo {volume} {340}},\ \bibinfo {pages} {867} (\bibinfo {year} {2015})}\BibitemShut {NoStop}%
\bibitem [{\citenamefont {Gullans}\ \emph {et~al.}(2021)\citenamefont {Gullans}, \citenamefont {Krastanov}, \citenamefont {Huse}, \citenamefont {Jiang},\ and\ \citenamefont {Flammia}}]{gullans2021quantum}%
  \BibitemOpen
  \bibfield  {author} {\bibinfo {author} {\bibfnamefont {M.~J.}\ \bibnamefont {Gullans}}, \bibinfo {author} {\bibfnamefont {S.}~\bibnamefont {Krastanov}}, \bibinfo {author} {\bibfnamefont {D.~A.}\ \bibnamefont {Huse}}, \bibinfo {author} {\bibfnamefont {L.}~\bibnamefont {Jiang}}, \ and\ \bibinfo {author} {\bibfnamefont {S.~T.}\ \bibnamefont {Flammia}},\ }\href {\doibase 10.1103/PhysRevX.11.031066} {\bibfield  {journal} {\bibinfo  {journal} {Phys. Rev. X}\ }\textbf {\bibinfo {volume} {11}},\ \bibinfo {pages} {031066} (\bibinfo {year} {2021})}\BibitemShut {NoStop}%
\bibitem [{\citenamefont {Hayden}\ \emph {et~al.}(2007)\citenamefont {Hayden}, \citenamefont {Horodecki}, \citenamefont {Winter},\ and\ \citenamefont {Yard}}]{hayden2008decoupling}%
  \BibitemOpen
  \bibfield  {author} {\bibinfo {author} {\bibfnamefont {P.~M.}\ \bibnamefont {Hayden}}, \bibinfo {author} {\bibfnamefont {M.}~\bibnamefont {Horodecki}}, \bibinfo {author} {\bibfnamefont {A.~J.}\ \bibnamefont {Winter}}, \ and\ \bibinfo {author} {\bibfnamefont {J.~T.}\ \bibnamefont {Yard}},\ }\href {https://api.semanticscholar.org/CorpusID:15027859} {\bibfield  {journal} {\bibinfo  {journal} {Open Syst. Inf. Dyn.}\ }\textbf {\bibinfo {volume} {15}},\ \bibinfo {pages} {7} (\bibinfo {year} {2007})}\BibitemShut {NoStop}%
\bibitem [{\citenamefont {Brown}\ and\ \citenamefont {Fawzi}(2013{\natexlab{b}})}]{brown2013scramblingspeedrandomquantum}%
  \BibitemOpen
  \bibfield  {author} {\bibinfo {author} {\bibfnamefont {W.}~\bibnamefont {Brown}}\ and\ \bibinfo {author} {\bibfnamefont {O.}~\bibnamefont {Fawzi}},\ }\href {https://arxiv.org/abs/1210.6644} {\enquote {\bibinfo {title} {Scrambling speed of random quantum circuits},}\ } (\bibinfo {year} {2013}{\natexlab{b}}),\ \Eprint {http://arxiv.org/abs/1210.6644} {arXiv:1210.6644 [quant-ph]} \BibitemShut {NoStop}%
\bibitem [{\citenamefont {Lloyd}(1997)}]{Lloyd_1997}%
  \BibitemOpen
  \bibfield  {author} {\bibinfo {author} {\bibfnamefont {S.}~\bibnamefont {Lloyd}},\ }\href {\doibase 10.1103/physreva.55.1613} {\bibfield  {journal} {\bibinfo  {journal} {Physical Review A}\ }\textbf {\bibinfo {volume} {55}},\ \bibinfo {pages} {1613–1622} (\bibinfo {year} {1997})}\BibitemShut {NoStop}%
\bibitem [{\citenamefont {Devetak}\ and\ \citenamefont {Shor}(2004)}]{devetak2004capacityquantumchannelsimultaneous}%
  \BibitemOpen
  \bibfield  {author} {\bibinfo {author} {\bibfnamefont {I.}~\bibnamefont {Devetak}}\ and\ \bibinfo {author} {\bibfnamefont {P.~W.}\ \bibnamefont {Shor}},\ }\href {https://arxiv.org/abs/quant-ph/0311131} {\enquote {\bibinfo {title} {The capacity of a quantum channel for simultaneous transmission of classical and quantum information},}\ } (\bibinfo {year} {2004}),\ \Eprint {http://arxiv.org/abs/quant-ph/0311131} {arXiv:quant-ph/0311131 [quant-ph]} \BibitemShut {NoStop}%
\bibitem [{\citenamefont {Nahum}\ \emph {et~al.}(2017)\citenamefont {Nahum}, \citenamefont {Ruhman}, \citenamefont {Vijay},\ and\ \citenamefont {Haah}}]{PhysRevX.7.031016}%
  \BibitemOpen
  \bibfield  {author} {\bibinfo {author} {\bibfnamefont {A.}~\bibnamefont {Nahum}}, \bibinfo {author} {\bibfnamefont {J.}~\bibnamefont {Ruhman}}, \bibinfo {author} {\bibfnamefont {S.}~\bibnamefont {Vijay}}, \ and\ \bibinfo {author} {\bibfnamefont {J.}~\bibnamefont {Haah}},\ }\href {\doibase 10.1103/PhysRevX.7.031016} {\bibfield  {journal} {\bibinfo  {journal} {Phys. Rev. X}\ }\textbf {\bibinfo {volume} {7}},\ \bibinfo {pages} {031016} (\bibinfo {year} {2017})}\BibitemShut {NoStop}%
\bibitem [{\citenamefont {Nahum}\ \emph {et~al.}(2018)\citenamefont {Nahum}, \citenamefont {Vijay},\ and\ \citenamefont {Haah}}]{PhysRevX.8.021014}%
  \BibitemOpen
  \bibfield  {author} {\bibinfo {author} {\bibfnamefont {A.}~\bibnamefont {Nahum}}, \bibinfo {author} {\bibfnamefont {S.}~\bibnamefont {Vijay}}, \ and\ \bibinfo {author} {\bibfnamefont {J.}~\bibnamefont {Haah}},\ }\href {\doibase 10.1103/PhysRevX.8.021014} {\bibfield  {journal} {\bibinfo  {journal} {Phys. Rev. X}\ }\textbf {\bibinfo {volume} {8}},\ \bibinfo {pages} {021014} (\bibinfo {year} {2018})}\BibitemShut {NoStop}%
\bibitem [{\citenamefont {Zhou}\ and\ \citenamefont {Nahum}(2019)}]{Zhou_2019}%
  \BibitemOpen
  \bibfield  {author} {\bibinfo {author} {\bibfnamefont {T.}~\bibnamefont {Zhou}}\ and\ \bibinfo {author} {\bibfnamefont {A.}~\bibnamefont {Nahum}},\ }\href {\doibase 10.1103/physrevb.99.174205} {\bibfield  {journal} {\bibinfo  {journal} {Physical Review B}\ }\textbf {\bibinfo {volume} {99}} (\bibinfo {year} {2019}),\ 10.1103/physrevb.99.174205}\BibitemShut {NoStop}%
\bibitem [{\citenamefont {Preskill}()}]{preskill}%
  \BibitemOpen
  \bibfield  {author} {\bibinfo {author} {\bibfnamefont {J.}~\bibnamefont {Preskill}},\ }\href {https://www.preskill.caltech.edu/ph219/} {\enquote {\bibinfo {title} {Physics219 caltech lecture notes},}\ }\BibitemShut {NoStop}%
\bibitem [{\citenamefont {Gottesman}(1998)}]{gottesman1998heisenbergrepresentationquantumcomputers}%
  \BibitemOpen
  \bibfield  {author} {\bibinfo {author} {\bibfnamefont {D.}~\bibnamefont {Gottesman}},\ }\href {https://arxiv.org/abs/quant-ph/9807006} {\enquote {\bibinfo {title} {The heisenberg representation of quantum computers},}\ } (\bibinfo {year} {1998}),\ \Eprint {http://arxiv.org/abs/quant-ph/9807006} {arXiv:quant-ph/9807006 [quant-ph]} \BibitemShut {NoStop}%
\bibitem [{\citenamefont {Aaronson}\ and\ \citenamefont {Gottesman}(2004)}]{Aaronson_2004}%
  \BibitemOpen
  \bibfield  {author} {\bibinfo {author} {\bibfnamefont {S.}~\bibnamefont {Aaronson}}\ and\ \bibinfo {author} {\bibfnamefont {D.}~\bibnamefont {Gottesman}},\ }\href {\doibase 10.1103/physreva.70.052328} {\bibfield  {journal} {\bibinfo  {journal} {Physical Review A}\ }\textbf {\bibinfo {volume} {70}} (\bibinfo {year} {2004}),\ 10.1103/physreva.70.052328}\BibitemShut {NoStop}%
\bibitem [{\citenamefont {Fisher}(1998)}]{RevModPhys.70.653}%
  \BibitemOpen
  \bibfield  {author} {\bibinfo {author} {\bibfnamefont {M.~E.}\ \bibnamefont {Fisher}},\ }\href {\doibase 10.1103/RevModPhys.70.653} {\bibfield  {journal} {\bibinfo  {journal} {Rev. Mod. Phys.}\ }\textbf {\bibinfo {volume} {70}},\ \bibinfo {pages} {653} (\bibinfo {year} {1998})}\BibitemShut {NoStop}%
\bibitem [{\citenamefont {Nelson}\ \emph {et~al.}(2025{\natexlab{b}})\citenamefont {Nelson}, \citenamefont {Rajakumar},\ and\ \citenamefont {Gullans}}]{JonJoel}%
  \BibitemOpen
  \bibfield  {author} {\bibinfo {author} {\bibfnamefont {J.}~\bibnamefont {Nelson}}, \bibinfo {author} {\bibfnamefont {J.}~\bibnamefont {Rajakumar}}, \ and\ \bibinfo {author} {\bibfnamefont {M.~J.}\ \bibnamefont {Gullans}},\ }\href@noop {} {\enquote {\bibinfo {title} {Limitations of noisy geometrically local quantum circuits},}\ } (\bibinfo {year} {2025}{\natexlab{b}}),\ \Eprint {http://arxiv.org/abs/2510.xxxxx} {arXiv:2510.xxxxx [quant-ph]} \BibitemShut {NoStop}%
\bibitem [{\citenamefont {Aharonov}\ and\ \citenamefont {Ben-Or}(1999)}]{aharonov1999faulttolerantquantumcomputationconstant}%
  \BibitemOpen
  \bibfield  {author} {\bibinfo {author} {\bibfnamefont {D.}~\bibnamefont {Aharonov}}\ and\ \bibinfo {author} {\bibfnamefont {M.}~\bibnamefont {Ben-Or}},\ }\href {https://arxiv.org/abs/quant-ph/9906129} {\enquote {\bibinfo {title} {Fault-tolerant quantum computation with constant error rate},}\ } (\bibinfo {year} {1999}),\ \Eprint {http://arxiv.org/abs/quant-ph/9906129} {arXiv:quant-ph/9906129 [quant-ph]} \BibitemShut {NoStop}%
\bibitem [{\citenamefont {Liu}\ \emph {et~al.}(2025)\citenamefont {Liu}, \citenamefont {Du}, \citenamefont {Liu},\ and\ \citenamefont {Ma}}]{liu2025approximatequantumerrorcorrection}%
  \BibitemOpen
  \bibfield  {author} {\bibinfo {author} {\bibfnamefont {G.}~\bibnamefont {Liu}}, \bibinfo {author} {\bibfnamefont {Z.}~\bibnamefont {Du}}, \bibinfo {author} {\bibfnamefont {Z.-W.}\ \bibnamefont {Liu}}, \ and\ \bibinfo {author} {\bibfnamefont {X.}~\bibnamefont {Ma}},\ }\href {https://arxiv.org/abs/2503.17759} {\enquote {\bibinfo {title} {Approximate quantum error correction with 1d log-depth circuits},}\ } (\bibinfo {year} {2025}),\ \Eprint {http://arxiv.org/abs/2503.17759} {arXiv:2503.17759 [quant-ph]} \BibitemShut {NoStop}%
\bibitem [{\citenamefont {Rajakumar}\ \emph {et~al.}(2025)\citenamefont {Rajakumar}, \citenamefont {Watson},\ and\ \citenamefont {Liu}}]{Rajakumar_2025}%
  \BibitemOpen
  \bibfield  {author} {\bibinfo {author} {\bibfnamefont {J.}~\bibnamefont {Rajakumar}}, \bibinfo {author} {\bibfnamefont {J.~D.}\ \bibnamefont {Watson}}, \ and\ \bibinfo {author} {\bibfnamefont {Y.-K.}\ \bibnamefont {Liu}},\ }\enquote {\bibinfo {title} {Polynomial-time classical simulation of noisy iqp circuits with constant depth},}\ in\ \href {\doibase 10.1137/1.9781611978322.30} {\emph {\bibinfo {booktitle} {Proceedings of the 2025 Annual ACM-SIAM Symposium on Discrete Algorithms (SODA)}}}\ (\bibinfo  {publisher} {Society for Industrial and Applied Mathematics},\ \bibinfo {year} {2025})\ p.\ \bibinfo {pages} {1037–1056}\BibitemShut {NoStop}%
\bibitem [{\citenamefont {Nelson}\ \emph {et~al.}(2024)\citenamefont {Nelson}, \citenamefont {Rajakumar}, \citenamefont {Hangleiter},\ and\ \citenamefont {Gullans}}]{nelson2024polynomialtimeclassicalsimulationnoisy}%
  \BibitemOpen
  \bibfield  {author} {\bibinfo {author} {\bibfnamefont {J.}~\bibnamefont {Nelson}}, \bibinfo {author} {\bibfnamefont {J.}~\bibnamefont {Rajakumar}}, \bibinfo {author} {\bibfnamefont {D.}~\bibnamefont {Hangleiter}}, \ and\ \bibinfo {author} {\bibfnamefont {M.~J.}\ \bibnamefont {Gullans}},\ }\href {https://arxiv.org/abs/2411.02535} {\enquote {\bibinfo {title} {Polynomial-time classical simulation of noisy circuits with naturally fault-tolerant gates},}\ } (\bibinfo {year} {2024}),\ \Eprint {http://arxiv.org/abs/2411.02535} {arXiv:2411.02535 [quant-ph]} \BibitemShut {NoStop}%
\bibitem [{\citenamefont {Fujii}\ and\ \citenamefont {Tamate}(2016)}]{fujii_computational_2016}%
  \BibitemOpen
  \bibfield  {author} {\bibinfo {author} {\bibfnamefont {K.}~\bibnamefont {Fujii}}\ and\ \bibinfo {author} {\bibfnamefont {S.}~\bibnamefont {Tamate}},\ }\href {\doibase 10.1038/srep25598} {\bibfield  {journal} {\bibinfo  {journal} {Sci Rep}\ }\textbf {\bibinfo {volume} {6}} (\bibinfo {year} {2016}),\ 10.1038/srep25598},\ \Eprint {http://arxiv.org/abs/1406.6932} {arXiv:1406.6932} \BibitemShut {NoStop}%
\bibitem [{\citenamefont {Ware}\ \emph {et~al.}(2023)\citenamefont {Ware}, \citenamefont {Deshpande}, \citenamefont {Hangleiter}, \citenamefont {Niroula}, \citenamefont {Fefferman}, \citenamefont {Gorshkov},\ and\ \citenamefont {Gullans}}]{ware2023sharpphasetransitionlinear}%
  \BibitemOpen
  \bibfield  {author} {\bibinfo {author} {\bibfnamefont {B.}~\bibnamefont {Ware}}, \bibinfo {author} {\bibfnamefont {A.}~\bibnamefont {Deshpande}}, \bibinfo {author} {\bibfnamefont {D.}~\bibnamefont {Hangleiter}}, \bibinfo {author} {\bibfnamefont {P.}~\bibnamefont {Niroula}}, \bibinfo {author} {\bibfnamefont {B.}~\bibnamefont {Fefferman}}, \bibinfo {author} {\bibfnamefont {A.~V.}\ \bibnamefont {Gorshkov}}, \ and\ \bibinfo {author} {\bibfnamefont {M.~J.}\ \bibnamefont {Gullans}},\ }\href {https://arxiv.org/abs/2305.04954} {\enquote {\bibinfo {title} {A sharp phase transition in linear cross-entropy benchmarking},}\ } (\bibinfo {year} {2023}),\ \Eprint {http://arxiv.org/abs/2305.04954} {arXiv:2305.04954 [quant-ph]} \BibitemShut {NoStop}%
\bibitem [{\citenamefont {Dalzell}\ \emph {et~al.}(2024)\citenamefont {Dalzell}, \citenamefont {{Hunter-Jones}},\ and\ \citenamefont {Brand{\~a}o}}]{dalzell_random_2024}%
  \BibitemOpen
  \bibfield  {author} {\bibinfo {author} {\bibfnamefont {A.~M.}\ \bibnamefont {Dalzell}}, \bibinfo {author} {\bibfnamefont {N.}~\bibnamefont {{Hunter-Jones}}}, \ and\ \bibinfo {author} {\bibfnamefont {F.~G. S.~L.}\ \bibnamefont {Brand{\~a}o}},\ }\href {\doibase 10.1007/s00220-024-04958-z} {\bibfield  {journal} {\bibinfo  {journal} {Commun. Math. Phys.}\ }\textbf {\bibinfo {volume} {405}},\ \bibinfo {pages} {78} (\bibinfo {year} {2024})},\ \Eprint {http://arxiv.org/abs/2111.14907} {arXiv:2111.14907} \BibitemShut {NoStop}%
\bibitem [{\citenamefont {Wilde}()}]{markwilde2}%
  \BibitemOpen
  \bibfield  {author} {\bibinfo {author} {\bibfnamefont {M.}~\bibnamefont {Wilde}},\ }\href@noop {} {\enquote {\bibinfo {title} {markwilde.com},}\ }\bibinfo {howpublished} {\url{http://www.markwilde.com/teaching/2015-fall-qit/lectures/lecture-19.pdf}},\ \bibinfo {note} {[Accessed 11-09-2025]}\BibitemShut {NoStop}%
\end{thebibliography}%
\clearpage
\onecolumngrid
\appendix

\section{Omitted proofs and calculations}

\subsection{Operational Interpretation of Coherent Information}
\label{app:omittedproofs}
\begin{fact}
Let $\mathcal E^{A\rightarrow B}$ be a channel from $\mathcal H_A$ to $\mathcal H_B$ where $\mathcal H_A$ has dimension $2^k$. If $I_c(\eye_A/2^k, \mathcal E) = k$ if and only if there exists a decoding channel $\mathcal D$ such that for any $\ket{\psi} \in \mathcal H_A$:
\begin{align}
    \mathcal D \circ \mathcal E(\ketbra{\psi}) = \ketbra{\psi}
\end{align}
\end{fact}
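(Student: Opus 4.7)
The plan is to prove the two directions separately: the reverse implication reduces to a one-line application of data processing, while the forward implication requires converting $I_c = k$ into a decoupling statement between the reference and the environment and then invoking the standard purification-uniqueness lemma to exhibit the decoder.

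For the easy direction ($\Leftarrow$), I would invoke the data-processing inequality for coherent information. If $\mathcal D \circ \mathcal E = \mathrm{Id}_A$ as channels, then
\begin{align}
    k = S(\eye_A/2^k) = I_c(\eye_A/2^k, \mathrm{Id}_A) = I_c(\eye_A/2^k, \mathcal D\circ\mathcal E) \le I_c(\eye_A/2^k, \mathcal E),
\end{align}
and the converse inequality $I_c \le k$ is the general upper bound already recorded in the main text.

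For the forward direction ($\Rightarrow$), I would work with the Stinespring dilation $U^{A \to BE}$ and the pure output $\ket{\phi}_{RBE}$. Purity gives $S(B) = S(RE)$, which I combine with $S(R) = k$ and the hypothesis $I_c = S(B) - S(E) = k$ to conclude
\begin{align}
    I(R:E) = S(R) + S(E) - S(RE) = k - I_c = 0,
\end{align}
so $\rho_{RE}$ factorizes as $(\eye_R/2^k) \otimes \rho_E$. Next, I would pick any purification $\ket{\sigma}_{EE'}$ of $\rho_E$, with $E'$ taken large enough that $\dim(AE') \ge \dim B$. Since $\ket{\phi}_{RBE}$ and $\ket{\Phi}_{RA} \otimes \ket{\sigma}_{EE'}$ are both purifications of the same state $\rho_{RE}$ on $RE$, the uniqueness-of-purification lemma yields an isometry $W : B \to AE'$ with $(\mathrm{Id}_R \otimes W)\ket{\phi}_{RBE} = \ket{\Phi}_{RA} \otimes \ket{\sigma}_{EE'}$. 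Defining the decoder by $\mathcal D(\sigma_B) := \Tr_{E'}[W \sigma_B W^\dagger]$, a direct calculation shows $(\mathrm{Id}_R \otimes \mathcal D\circ\mathcal E)(\Phi_{RA}) = \Phi_{RA}$, and the Choi-Jamiolkowski isomorphism then upgrades this to $\mathcal D\circ\mathcal E = \mathrm{Id}_A$ as channels, giving $\mathcal D\circ\mathcal E(\ketbra{\psi}) = \ketbra{\psi}$ for every $\ket{\psi} \in \mathcal H_A$.

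The main obstacle is this isometry step. One has to be careful about dimensions (enlarging $E'$ if needed to accommodate a possibly non-minimal Stinespring dilation), and strictly speaking $W$ is only canonically defined as a partial isometry on $\mathrm{supp}(\rho_B)$. Extending $W$ arbitrarily off that support to a full isometry, however, does not affect the identity $(\mathrm{Id}_R \otimes \mathcal D\circ\mathcal E)(\Phi_{RA}) = \Phi_{RA}$, because the Schmidt decomposition of $\ket{\phi}_{RBE}$ across the $B\,|\,RE$ cut already lies in $\mathrm{supp}(\rho_B) \otimes \mathcal H_R \otimes \mathcal H_E$, so the action of $W$ outside $\mathrm{supp}(\rho_B)$ never contributes to the calculation.
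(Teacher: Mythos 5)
Your proposal is correct and follows essentially the same route as the paper's proof: both directions rest on the decoupling $\rho_{RE} = \rho_R \otimes \rho_E$ forced by $I_c = k$, the uniqueness of purifications to construct the decoder, and the data-processing inequality for the converse. The only differences are cosmetic -- you realize the decoder as an isometry $W: B \to AE'$ followed by a partial trace rather than a unitary on a factorization $B = B_1 B_2$, and you conclude via the Choi--Jamiolkowski isomorphism instead of the relative-state method -- and these choices are, if anything, slightly more careful about dimensions and supports than the paper's sketch.
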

\begin{proof}
The provided proof sketch follows the notes of \cite{preskill} and adopts similar notation for consistency. As in the main text, we consider the purification of $\eye_A/2^k$ to be $\ket{\Phi}_{RA} = \sum_{i=1}^{2^k} \ketbra{i,i}{i,i}$, the channel dilation of $\mathcal E$ to be $U^{A \rightarrow BE}$, and the purification of the output to be $\ket{\phi_{RBE}}$.
We start by showing that $I_c = k$ implies that the reference and environment of $\ket{\phi_{RBE}}$ are decoupled.
\begin{align}
    I_c &= S(B) - S(E) = S(RE) - S(E)= k \\
    &\implies S(RE)= k + S(E) \\
    &\implies S(RE) = S(R) + S(E)
\end{align}

Therefore, the marginal state of $\ket{\phi}_{RBE}$ on $RE$ must be a product state. First let us purify the marginal states on $R$ and $E$ and denote these as $\ket{\psi}_{RB_1}$ and $\ket{\chi}_{B_2E}$ which together give the purification of the marginal state on $RE$ as $\ket{\psi}_{RB_1} \otimes \ket{\chi}_{B_2E}$. Next, using the fact that all purifications differ by a unitary transformation on the purifying register, we have that $\ket{\phi}_{RBE} = W_B(\ket{\psi}_{RB_1} \otimes \ket{\chi}_{B_2E})$ where $W_B$ is a unitary matrix that acts only on $B= B_1 \cup B_2$. Finally, using this fact once more, we have that there exists a unitary $V_{B_1}$ such that $\ket{\Phi}_{RA} = V_{B_1}\ket{\psi}_{RB_1}$. Therefore, we can let our decoding map be $\mathcal D = V_{B_1}W_B^\dagger$, which exactly recovers the maximally entangled state. Finally, we can show that recovering the purification of the maximally entangled state implies the ability to recover any logical codestate. This can be done using the relative-state method (section 3.3 of \cite{preskill}) as shown below.

First, let $\ket{\varphi}_A$ be some logical codestate. We can write this as:

\begin{align}
    \ket{\varphi}_A = \sum_i \varphi_i \ket{i}_A = \sqrt{2^n} \sum_i \varphi_i \left(\prescript{}{R}{\braket{i}{\Phi}_{RA}} \right)= \sqrt{2^n} \prescript{}{R}{\braket{\varphi^*}{\Phi}_{RA}}
\end{align}
Next, using linearity we have:
\begin{align}
    \mathcal D \circ \mathcal E(\ketbra{\varphi}_A) &= 2^n \prescript{}{R}{\bra{\varphi^*}} \mathcal (\mathrm{Id}_R \otimes D \circ \mathcal E)(\ketbra{\Phi_{RA}}) \ket{\varphi^*}_R \\
    &= 2^n \prescript{}{R}{\braket{\varphi^*}{\Phi_{RA}}} \braket{\Phi_{RA}}{\varphi^*}_R \\
    &= \ketbra{\varphi}_A
\end{align}

The converse is also true in that if the purification of the maximally entangled state can be recovered, then $I_c = k$. This follows by noting that $I_c(\rho_A, \mathcal D \circ \mathcal{E}) =  k$ since the reference and environment are necessarily decoupled if the reference system is successfully purified into the channel's output. Next, the statement follows by the quantum data processing inequality, which states that a channel cannot increase the coherent information.

\end{proof}
\begin{fact}
    Let $\mathcal E^{A\rightarrow B}$ be a channel from $\mathcal H_A$ to $\mathcal H_B$ where $\mathcal H_A$ has dimension $2^k$. If $I_c(\Id_A/2^k, \mathcal E) = -k$, then for any decoding channel $\mathcal D$ and any two codestates $\ket{\psi_1}$ and $\ket{\psi_2}$,
    \begin{align}
    \mathcal D \circ \mathcal E(\ketbra{\psi_1}) = \mathcal D \circ \mathcal E(\ketbra{\psi_2}) 
\end{align}
i.e. no logical information can be transmitted.
\end{fact}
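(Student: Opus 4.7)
The plan is to derive, from the hypothesis $I_c = -k$, that the joint state on the reference and output registers of $\ket{\phi}_{RBE}$ is a product state, and then to use the relative-state identity from the previous fact to conclude that $\mc E$ sends every codestate to the same output.

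First I would translate the hypothesis into a statement about mutual information. Using the identity $I_c = I(R:B) - k$ derived in the main text, the assumption $I_c = -k$ is equivalent to $I(R:B) = 0$. Mutual information vanishes if and only if the joint state factorizes (saturation of subadditivity, equivalently nonnegativity of $D(\rho_{RB}\,\|\,\rho_R\otimes\rho_B)$), and since $\rho_R = \eye_A/2^k$ because $R$ purifies the maximally entangled state $\ket{\Phi}_{RA}$, I obtain
\begin{align}
\rho_{RB} = \frac{\eye_A}{2^k} \otimes \rho_B
\end{align}
for some fixed state $\rho_B$ on the output register.

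Next I would apply the relative-state identity used in the proof of the previous fact. For any codestate $\ket{\varphi} \in \mc H_A$,
\begin{align}
\mc E(\ketbra{\varphi}) = 2^k\,\prescript{}{R}{\bra{\varphi^*}}\,\rho_{RB}\,\ket{\varphi^*}_R = \braket{\varphi^*}{\varphi^*}\,\rho_B = \rho_B,
\end{align}
where the middle equality substitutes the product form of $\rho_{RB}$ obtained above. Hence $\mc E$ maps every codestate to the same output state $\rho_B$, independent of the input, so applying any decoding channel $\mc D$ yields $\mc D\circ\mc E(\ketbra{\psi_1}) = \mc D(\rho_B) = \mc D\circ\mc E(\ketbra{\psi_2})$, which is the desired statement.

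The only step that relies on a nontrivial ingredient is the equivalence between $I(R:B) = 0$ and the product form of $\rho_{RB}$, which is a standard consequence of Klein's inequality (the equality case of positivity of quantum relative entropy). Once this is in hand, the rest of the argument is essentially linear algebra: the relative-state trick converts the product structure of $\rho_{RB}$ directly into the constancy of $\mc E$ on codestates. I do not anticipate any serious obstacle beyond invoking this standard equality condition.
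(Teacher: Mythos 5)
Your proof is correct and follows essentially the same route as the paper: conclude $I(R:B)=0$ from $I_c=-k$, deduce the product form $\rho_{RB}=\frac{\eye}{2^k}\otimes\rho_B$, and apply the relative-state identity to show the output is independent of the codestate. The only (minor) difference is that the paper applies the decoding map first and invokes the data processing inequality to keep $I(R:B)=0$, whereas you establish constancy of $\mathcal E$ itself and then compose with $\mathcal D$ trivially, which slightly streamlines the argument.
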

\begin{proof}
     This is seen by observing that when $I_c = -k$ then $I(R:B) =0$ even after any decoding map is applied due to the quantum data processing inequality. Let $\rho_{RB} =  (\mathrm{Id}_R \otimes \mathcal D \circ \mathcal E)(\ketbra{\Phi_{RA}})$. Since $I(R:B) = 0$ we have that $\rho_{RB} = \rho_R \otimes \rho_B = \frac{\Id_R}{2^n} \otimes \rho_B$. Then, we have that for any logical state $\ketbra{\varphi}_A$:
\begin{align}
    \mathcal D \circ \mathcal E(\ketbra{\varphi}_A) &= 2^n \prescript{}{R}{\bra{\varphi^*}} \mathcal (\mathrm{Id}_R \otimes \mathcal D \circ \mathcal E)(\ketbra{\Phi_{RA}}) \ket{\varphi^*}_R \\
    &=2^n \prescript{}{R}{\bra{\varphi^*}}\frac{\Id_R}{2^n} \otimes \rho_B \ket{\varphi^*}_R \\
    &= \rho_B
\end{align}
Therefore, no matter what the logical state is, the output will always be the same, and so no information can be recovered.
\end{proof}

\subsection{Calculating the Log-Average Purity}
\label{app:purity}
Recall that the log-average purity is defined as:
\begin{align}
    I_p = -\log( \E_{\mathcal C} \Tr \rho_B^2) + \log( \E_{\mathcal C} \Tr\rho^2)
\end{align}

We can rewrite the right-hand side of this equation in a way that is easier to calculate by first introducing a second copy of the state giving $\rho \otimes \rho$. Next, we introduce the notation $\mathcal S$, which denotes the swap operator between two copies of a qubit, and $\mc I$, which denotes the identity operator acting on two copies of a qubit. We further define $\mathbb S := \mathcal S^{\otimes n}$ and $\mathbb S_A := \mc I^{\otimes |\bar{A}|}\otimes \mathcal S^{\otimes |A|}$. Using this notation, we have the following identity:
\begin{align}
    \Tr \rho_C^2 = \Tr \mathbb S_C (\rho \otimes \rho)
\end{align}
for an arbitrary region $C$. Applying this to \cref{eq:Ip}, we get:
\begin{align}
    I_p =  -\log( \E_{\mathcal C} \Tr \mathbb S_B \rho^{\otimes 2}) + \log( \E_{\mathcal C} \Tr \mathbb S \rho^{\otimes 2})
\end{align}

Recall that $\rho = (\mathrm{Id}_R \otimes \mathcal E)(\ketbra{\Phi_{RA}})$. Plugging this in, we can make further rearrangements as follows:
\begin{align}
    -\log( \E_{\mathcal C} \Tr \mathbb S_B \rho^{\otimes 2}) &= -\log( \E_{\mathcal C} \Tr \mathbb S_B (\mathrm{Id}_R \otimes \mathcal E)^{\otimes 2} (\ketbra{\Phi_{RA}}^{\otimes 2})) \\
    &= -\log( \Tr \mathbb S_B \E_{\mathcal C} \mathcal E^{\otimes 2} \left(\frac{\mc I}{4}\right)^{\otimes k}) 
\end{align}
And similarly for the second term:

\begin{align}
    \log( \E_{\mathcal C} \Tr \mathbb S \rho^{\otimes 2}) &= \log( \E_{\mathcal C} \Tr \mathbb S (\mathrm{Id}_R \otimes \mathcal E)^{\otimes 2} (\ketbra{\Phi_{RA}}^{\otimes 2})) \\
    &= \log(\Tr \mathbb S_B \E_{\mathcal C} \mathcal E^{\otimes 2} \left(\frac{\mathcal S}{4}\right)^{\otimes k}) \\
    &=
    \log(\Tr \mathbb S_B \E_{\mathcal C} \mathcal E^{\otimes 2} \left(\frac{\mathcal S}{2}\right)^{\otimes k}) - k
\end{align}
Notice that in the above expression we normalize $\mc I$ and $\mathcal S$ by their respective traces. We begin our calculation of $\mathcal E^{\otimes 2} (\left(\frac{\mc I}{4}\right)^{\otimes k})$ by applying two copies of the first layer of $\mathcal C$ to $\left(\frac{\mc I}{4}\right)^{\otimes k} \otimes (\ketbra{0}^{\otimes 2})^{ \otimes n-k}$. We let the first layer of gates be single-qubit gates drawn from the Haar measure. Notice that introducing this layer does not change the expectation since the Haar measure is both left and right invariant over the unitary group. Let us denote the second moment operator for the single-qubit Haar random gate acting on qubit $i$ as
\begin{align}
    \mathcal M_i (O) =  \E_{V \sim \mathcal H} (V\otimes V) O (V \otimes V)^{\dagger} \label{eq:single-qubit}
\end{align}
where $O$ is an operator on both copies of qubit $i$. The key insight is that this average, also known as the second moment operator of the Haar measure, projects any operator onto the symmetric subspace, which is spanned by tensor products of $\mathcal S/2$ and $\mc I/4$. In fact, every remaining operation leaves our state in this symmetric subspace. This is still intractable, however, since this space is exponentially large. In order to simplify the calculation further, we follow \cite{ware2023sharpphasetransitionlinear}, which considers an all-to-all circuit geometry where qubits are randomly permuted between layers of gates. We also adopt much of the notation from \cite{ware2023sharpphasetransitionlinear} for consistency. Now, after symmetrizing over permutations, our state is constrained to a subspace spanned by $n+1$ basis operators, which are defined by taking the average over all tensor products of the swap and identity operators with the same Hamming weight of total swap operators. More specifically,
we denote the basis states by $\{\ket{S}\}$ where $S \in [0,n]$ specifies the Hamming weight, 
\begin{align}
    \ket{S} = \frac{1}{{n \choose S}}\sum_{\sigma \in \{0,1\}^{n} \text{ s.t. } |\sigma| = S}\left(\frac{\mc I}{4}\right)^{1-\sigma_i}\left(\frac{\mathcal S}{2}\right)^{\sigma_i}
\end{align}

With this basis, we can formulate our problem as applying a series of transfer matrices \cite{dalzell_random_2024,ware2023sharpphasetransitionlinear}. 
Deferring the calculation details to the proof of \cref{fact:firstlayer}, we have that the state after the first layer of single-qubit gates and random permutations is the following:

\begin{align}
    \E_{P \sim S_n} P \circ \mathcal M_n \circ \cdots \circ \mathcal M_1(\left(\frac{\mc I}{4}\right)^{\otimes k} \otimes (\ketbra{0}^{\otimes 2})^{ \otimes n-k})) = \frac{1}{3^{n-k}}\sum_{S = 0}^{n-k} {n-k \choose S} 2^{n-k-S}\ket{S}
\end{align}
\begin{align}
    \E_{P \sim S_n} P \circ \mathcal M_n \circ \cdots \circ \mathcal M_1(\left(\frac{\mathcal S}{2}\right)^{\otimes k} \otimes (\ketbra{0}^{\otimes 2 })^{\otimes n-k})) = \frac{1}{3^{n-k}}\sum_{S = 0}^{n-k} {n-k \choose S} 2^{n-k-S}\ket{S+k}
\end{align}
where $S_n$ is the $n$-element permutation group. 

Next, we must show how to update this state after each layer of 2-qubit gates and noise channels. To do this, it is sufficient to show the action of each layer on each basis state $\ket{S}$. Let us denote the second moment operator for the two-qubit Haar random gate acting on qubits $i$ and $j$ as
\begin{align}
    \mathcal M_{i,j} (O) =  \E_{V \sim \mathcal H} (V\otimes V) O (V \otimes V)^{\dagger}
\end{align}
where $V$ is a two-qubit gate acting on qubits $i$ and $j$ and $O$ is an operator on both copies of qubits $i$ and $j$..

Then we define the transfer matrix $M_{S',S}$, which corresponds to the action of a layer of 2-qubit gates followed by a random permutation:
\begin{align}
\label{eq:transfermatrix}
    \E_{P \in S_n} P \circ \mathcal M_{n-1,n} \circ \cdots \circ \mathcal M_{1,2} (\ket{S}) = \sum_{S'=0}^n M_{S',S} \ket{S'}
\end{align}

It can be shown (see \cref{fact:2qubit}) that $M_{S',S}$ is defined as follows:

\begin{align}
    M_{S',S} = \frac{1}{{n\choose S}} \sum_{n_0,n_1,n_2} 2^{n_1} {n/2 \choose n_0,n_1,n_2} \delta_{S,n_1+2n_2} \left[\sum_{a,b} {n_1 \choose a,b} \left(\frac{4}{5}\right)^a\left(\frac{1}{5}\right)^b \delta_{S',2b+2n_2}\right]
\end{align}

Finally, we must give the transfer matrix associated with applying a layer of single-qubit noise channels. In this case we apply random single-qubit gates right after the noise channel, which has no physical effect due to the left-right invariance of the Haar measure. The transfer matrix can be defined as

\begin{align}
    \label{eq:noisetransfer}
    N_{S',S} = \sum_{a = \max(0,S-S')}^{\min(S,N-S')} {S \choose a} {N-S \choose S' - (S-a)}\delta^{S' - (S-a)} \gamma^{a} (1-\delta)^{N - S' -a} (1- \gamma)^{S-a}
\end{align}
where $\delta$ and $\gamma$ are parameters of the noise channel that depend on the noise rate. See \cref{fact:noise} for further details in the derivation of this transfer matrix. For the depolarizing channel defined in \cref{eq:depolarizing}, $\delta = 0$ and $\gamma = 1-(1-\frac{4p}{3})^2$. Setting $\delta = 0$, the transfer matrix in \cref{eq:noisetransfer} simplifies to: 
\begin{align}
    N_{S',S} = {S \choose S'} (1-\gamma)^{S'}\gamma^{S-S'}
\end{align}
where $N_{S',S} = 0$ if $S<S'$. For the damping channel defined in \cref{eq:damping}, $\delta = \frac{p^2}{3}$ and $\gamma = \frac{2}{3}(2p-p^2)$.

Finally, after applying the transfer matrices for $d$ layers of gates, permutations, and noise channels, we have obtained a decomposition of $\E_{\mathcal C} \mathcal E^{\otimes 2} \left(\frac{\mc I}{4}\right)^{\otimes k}$ and $\E_{\mathcal C} \mathcal E^{\otimes 2} \left(\frac{\mc S}{2}\right)^{\otimes k}$ in basis vectors spanned by $\ket{S}$. We can then efficiently compute $\Tr \mathbb S\E_{\mathcal C} \mathcal E^{\otimes 2} \left(\frac{\mc I}{4}\right)^{\otimes k}$ and  $\Tr \mathbb S\E_{\mathcal C} \mathcal E^{\otimes 2} \left(\frac{\mc S}{2}\right)^{\otimes k}$ using the fact that
\begin{align}
    \Tr\mathbb{S} \ket{S} = 2^{2S-n}
\end{align}
and linearity of trace.

\begin{fact}
    \label{fact:firstlayer}
\begin{align}
    \E_{P \sim S_n} P \circ \mathcal M_n \circ \cdots \circ \mathcal M_1(\left(\frac{\mathcal I}{4}\right)^{\otimes k} \otimes (\ketbra{0}^{\otimes 2})^{ \otimes n-k})) = \frac{1}{3^{n-k}}\sum_{S = 0}^{n-k} {n-k \choose S} 2^{n-k-S}\ket{S}
\end{align}
\begin{align}
    \E_{P \sim S_n} P \circ \mathcal M_n \circ \cdots \circ \mathcal M_1(\left(\frac{\mathcal S}{2}\right)^{\otimes k} \otimes (\ketbra{0}^{\otimes 2 })^{\otimes n-k})) = \frac{1}{3^{n-k}}\sum_{S = 0}^{n-k} {n-k \choose S} 2^{n-k-S}\ket{S+k}
\end{align}
where $S_n$ is the $n$-element permutation group and $\mathcal M_i$ is defined as in \cref{eq:single-qubit}.
\end{fact}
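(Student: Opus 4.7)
The plan is to reduce everything to the single-qubit Haar twirl of $\ketbra{0}^{\otimes 2}$, expand the resulting tensor product, and then show that the random permutation symmetrizes each fixed-Hamming-weight summand into a $\ket{S}$ basis vector. Since the single-qubit maps $\mathcal{M}_i$ act on disjoint qubits and commute with one another, I can analyze the effect qubit by qubit before addressing the permutation.

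First I would compute $\mathcal{M}_i(\ketbra{0}^{\otimes 2})$. For a single qubit, the image of $\mathcal{M}$ lies in the commutant of $V\otimes V$, which by Schur--Weyl duality is spanned by $\mathcal{I}$ and $\mathcal{S}$. So $\mathcal{M}(\ketbra{0}^{\otimes 2}) = \alpha \mathcal{I} + \beta \mathcal{S}$ for constants determined by tracing against $\mathcal{I}$ and $\mathcal{S}$; using that trace and $\Tr(\mathcal{S}\,\cdot\,)$ are invariant under the twirl (because $\mathcal{S}$ commutes with $V\otimes V$), and $\Tr(\ketbra{0}^{\otimes 2}) = \Tr(\mathcal{S}\ketbra{0}^{\otimes 2}) = 1$, the two linear equations $1 = 4\alpha + 2\beta$ and $1 = 2\alpha + 4\beta$ give $\alpha = \beta = 1/6$. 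Hence
\begin{align}
\mathcal{M}_i(\ketbra{0}^{\otimes 2}) = \frac{\mathcal{I}+\mathcal{S}}{6} = \frac{2}{3}\cdot\frac{\mathcal{I}}{4} + \frac{1}{3}\cdot\frac{\mathcal{S}}{2}.
\end{align}
On the first $k$ qubits, both $\mathcal{I}/4$ and $\mathcal{S}/2$ are already Haar-invariant (they lie in the commutant), so $\mathcal{M}_i$ acts as the identity on each logical qubit regardless of whether its initial operator is $\mathcal{I}/4$ or $\mathcal{S}/2$.

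Next I would expand the tensor product over the $n-k$ ancilla qubits:
\begin{align}
\left(\frac{2}{3}\cdot\frac{\mathcal{I}}{4} + \frac{1}{3}\cdot\frac{\mathcal{S}}{2}\right)^{\otimes (n-k)} = \sum_{S=0}^{n-k}\binom{n-k}{S}\left(\frac{2}{3}\right)^{n-k-S}\!\left(\frac{1}{3}\right)^{S}\!\!\sum_{\substack{\sigma\in\{0,1\}^{n-k}\\|\sigma|=S}} \frac{1}{\binom{n-k}{S}}\bigotimes_i \left(\tfrac{\mathcal{I}}{4}\right)^{1-\sigma_i}\!\!\left(\tfrac{\mathcal{S}}{2}\right)^{\sigma_i}.
\end{align}
Tensoring with $(\mathcal{I}/4)^{\otimes k}$ on the left gives a sum of configurations on all $n$ qubits in which the first $k$ entries are $\mathcal{I}/4$ and exactly $S$ of the remaining $n-k$ entries are $\mathcal{S}/2$, so each configuration has total Hamming weight $S$. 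Averaging over a uniformly random permutation in $S_n$ sends each such configuration to the uniform average over all weight-$S$ configurations on $n$ qubits, which is by definition $\ket{S}$. Collecting coefficients and using $(2/3)^{n-k-S}(1/3)^{S} = 2^{n-k-S}/3^{n-k}$ yields the first claimed identity. The second identity is handled identically, except that the first $k$ entries are now $\mathcal{S}/2$, contributing $k$ additional swaps to every configuration, so after permutation each term becomes $\ket{S+k}$.

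There is no real obstacle; the only nontrivial computation is the single-qubit twirl above, which is standard. The remainder is bookkeeping: verifying that $\mathcal{M}_i$ acts trivially on the logical positions, and checking that a uniformly random permutation converts a product of $s$ swap factors placed at any fixed set of locations into the symmetrized basis operator $\ket{s}$.
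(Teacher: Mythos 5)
Your proposal is correct and follows essentially the same route as the paper: compute the single-qubit twirl $\mathcal M_i(\ketbra{0}^{\otimes 2}) = (\mathcal I + \mathcal S)/6$, note that $\mathcal I/4$ and $\mathcal S/2$ are fixed points of the twirl, binomially expand the tensor product, and observe that the uniform permutation average collapses each fixed-weight configuration onto the basis operator $\ket{S}$ (or $\ket{S+k}$). Your derivation of the twirl coefficients from the two trace constraints is just a special case of the general formula the paper quotes, and you spell out the permutation-averaging bookkeeping slightly more explicitly than the paper does.
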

\begin{proof}
The Haar average of an operator $O$ is defined as follows:

\begin{align}
\label{eq:haaravg}
    \E_{V \sim \mathcal H} (V\otimes V) O (V \otimes V)^{\dagger} &= a \mc I + b \mathcal S,
\end{align}
where 
\begin{align}
    a = \frac{\Tr O - \frac{1}{2}\Tr O\mathcal S}{3}
\end{align}
and 
\begin{align}
    b = \frac{\Tr OS - \frac{1}{2}\Tr O}{3}
\end{align}

The only relevant input operators are $\ketbra{0}^{\otimes 2}$, $\mathcal S$, $\mc I$. We have
\begin{align}
     \E_{V \sim \mathcal H} (V\otimes V) \ketbra{0}^{\otimes 2} (V \otimes V)^{\dagger} &= \mc I/6 + \mathcal S/6,
\end{align}
\begin{align}
     \E_{V \sim \mathcal H} (V\otimes V) \mathcal S (V \otimes V)^{\dagger} &= \mathcal S,
\end{align}
\begin{align}
     \E_{V \sim \mathcal H} (V\otimes V) \mathcal I (V \otimes V)^{\dagger} &= \mc I,
\end{align}

Now averaging this uniformly over qubit permutations we get:
\begin{align}
    \frac{1}{n!} \sum_{P \in S_n} P \circ M_n \circ \cdots \circ \mathcal M_1(\left(\frac{\mathcal I}{4}\right)^{\otimes k} \otimes \ketbra{0}^{\otimes n-k}) = \frac{1}{3^{n-k}}\sum_{S = 0}^{n-k} {n-k \choose S} 2^{n-k-S}\ket{S}
\end{align}
where $S_n$ is the $n$-element permutation group. A similar calculation shows the following as well:
\begin{align}
    \frac{1}{n!} \sum_{P \in S_n} P \circ M_n \circ \cdots \circ \mathcal M_1(\left(\frac{\mathcal S}{2}\right)^{\otimes k} \otimes \ketbra{0}^{\otimes n-k}) = \frac{1}{3^{n-k}}\sum_{S = 0}^{n-k} {n-k \choose S} 2^{n-k-S}\ket{S+k}
\end{align}
\end{proof}
\begin{fact}
\label{fact:2qubit}
    \begin{align}
    M_{S',S} = \frac{1}{{n\choose S}} \sum_{n_0,n_1,n_2} 2^{n_1} {n/2 \choose n_0,n_1,n_2} \delta_{S,n_1+2n_2} \left[\sum_{a,b} {n_1 \choose a,b} \left(\frac{4}{5}\right)^a\left(\frac{1}{5}\right)^b \delta_{S',2b+2n_2}\right]
\end{align}
where $M$ is defined as in \cref{eq:transfermatrix}
\end{fact}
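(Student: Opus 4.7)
The plan is to decompose the left-hand side of \cref{eq:transfermatrix} into the per-pair action of the 2-qubit Haar second-moment operator followed by the random qubit permutation, and then collect the combinatorial coefficients that arise from summing over the input and output configurations.

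First I would fix notation and count configurations. Writing $\mc I/4$ and $\mc S/2$ for the normalized identity and swap on the two-copy single-qubit space, $\ket{S}$ is the uniform average over the $\binom{n}{S}$ tensor products containing $n-S$ factors of $\mc I/4$ and $S$ factors of $\mc S/2$. I would then group the $n$ qubits into the $n/2$ pairs on which the 2-qubit gates act, so that each input configuration is specified by the triple $(n_0,n_1,n_2)$ counting pairs with $0$, $1$, and $2$ swap factors, subject to $n_0+n_1+n_2 = n/2$ and $n_1+2n_2 = S$. For a fixed triple, the number of configurations is $\binom{n/2}{n_0,n_1,n_2}\,2^{n_1}$, where the $2^{n_1}$ counts the two orderings $(\mc I/4)\otimes(\mc S/2)$ versus $(\mc S/2)\otimes(\mc I/4)$ on each half-swap pair. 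This yields the $\delta_{S,n_1+2n_2}$ constraint and the multinomial factor in the claimed expression.

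Next I would compute the per-pair Haar average using the standard $d=4$ second-moment formula from the commutant $\{\mathcal I,\mathcal S\}$. Evaluating $\Tr(O)$ and $\Tr(O\mathcal S_{12})$ for each of the four pair inputs (they take values $(1,1/4)$, $(1,1)$, $(1,1)$, $(1,4)$ respectively) and plugging into Weingarten yields $(\mc I/4)^{\otimes 2} \mapsto (\mc I/4)^{\otimes 2}$, $(\mc S/2)^{\otimes 2} \mapsto (\mc S/2)^{\otimes 2}$, and the two half-swap inputs both map to $\tfrac{4}{5}(\mc I/4)^{\otimes 2} + \tfrac{1}{5}(\mc S/2)^{\otimes 2}$. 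Tensoring these independent pair outputs for a fixed input configuration gives
\begin{align*}
\sum_{a+b=n_1}\binom{n_1}{a,b}\left(\tfrac{4}{5}\right)^a\left(\tfrac{1}{5}\right)^b O_{a,b},
\end{align*}
where each $O_{a,b}$ is a specific tensor product with $n_0+a$ identity-pair factors and $n_2+b$ swap-pair factors, and hence total swap count $S' = 2(n_2+b)$.

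Finally I would handle the random permutation. Since $S_n$ acts transitively on the $\binom{n}{S'}$ swap patterns of total count $S'$, uniformly averaging any $O_{a,b}$ over $S_n$ produces each pattern with equal weight and so equals $\ket{S'}$ with $S' = 2(n_2+b)$; this reproduces the $\delta_{S',2b+2n_2}$ in the formula. Assembling all contributions with the overall $1/\binom{n}{S}$ normalization of $\ket{S}$ yields the claimed expression for $M_{S',S}$. The main obstacle is the bookkeeping: the $2^{n_1}$ factor from the orientation of half-swap pairs and the $\binom{n_1}{a,b}$ factor from the Haar expansion both live on the same $n_1$ pairs, and one must carefully verify that these contributions are independent and multiplicative. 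A related subtlety is that the intermediate post-gate state is generally not in the $\ket{S'}$ basis; the basis-vector identification only emerges after the permutation average, so the Kronecker delta $\delta_{S',2b+2n_2}$ must be justified at the permutation step rather than during the per-pair Haar calculation.
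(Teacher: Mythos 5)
Your proposal is correct and follows essentially the same route as the paper's proof sketch: compute the per-pair $4\times 4$ transfer matrix from the two-qubit Haar second moment (your traces and resulting maps reproduce the paper's matrix with entries $1$, $4/5$, $1/5$ exactly), classify input configurations by $(n_0,n_1,n_2)$ with the $2^{n_1}\binom{n/2}{n_0,n_1,n_2}$ multiplicity, expand the half-swap pairs via $(a,b)$, and identify $\ket{S'}$ only after the permutation average. The paper gives only a sketch stating the pair transfer matrix and the meaning of the indices; your write-up supplies the Weingarten evaluation and the bookkeeping explicitly, and both are consistent.
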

\begin{proof}[Proof sketch]
    It is helpful to first note the 2-qubit transfer matrix $M$ is given by
    \begin{align}
        M = 
\begin{bmatrix}
1 & 4/5 & 4/5 & 0 \\
0 & 0 & 0 & 0 \\
0 & 0 & 0 & 0 \\
0 & 1/5 & 1/5 & 1
\end{bmatrix}
\end{align} where the columns and rows are indexed by $\mc I \otimes \mc I$, $\mc I \otimes \mathcal S$, $\mathcal S \otimes \mc I$, and $\mathcal S \otimes \mathcal S$. This can be computed using \cref{eq:haaravg}. In the above definition of $M$, $n_0$, $n_1$, $n_2$ represent the number of gates which act on a tensor product of $0$, $1$, or $2$ swap operators. Among the gates that encounter $1$ swap operator (there are $n_1$ such gates), $a$ and $b$ represent the number of these gates that output a tensor product of $0$ or $2$ swap operators.
\end{proof} 
\begin{fact}
    \label{fact:noise}
    \begin{align}
    \label{eq:noisetransferapp}
    N_{S',S} = \sum_{a = \max(0,S-S')}^{\min(S,N-S')} {S \choose a} {N-S \choose S' - (S-a)}\delta^{S' - (S-a)} \gamma^{a} (1-\delta)^{N - S' -a} (1- \gamma)^{S-a}
\end{align}
\end{fact}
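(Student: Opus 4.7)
The plan is to derive the transfer matrix $N_{S',S}$ by first analyzing the action of the noise channel on a single qubit's two-copy symmetric subspace, and then lifting this to an $N$-qubit combinatorial count via independence of the per-qubit noise.

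First I would establish the per-qubit transfer matrix. Because we work with two copies, the relevant operation is $\mathcal N \otimes \mathcal N$ acting on each qubit-pair, and since the input state lies in the symmetric subspace spanned by $\{\mc I/4, \mathcal S/2\}$, the output must lie there as well. The resulting $2\times 2$ matrix is column-stochastic (trace-preservation of $\mathcal N^{\otimes 2}$ applied to both basis elements), so it has the form
\begin{align}
    \mathcal N^{\otimes 2}(\mc I/4) &= (1-\delta)(\mc I/4) + \delta(\mathcal S/2), \\
    \mathcal N^{\otimes 2}(\mathcal S/2) &= \gamma(\mc I/4) + (1-\gamma)(\mathcal S/2).
\end{align}
The displayed values of $\delta$ and $\gamma$ for the depolarizing and amplitude damping channels then follow from a short direct computation against the Kraus operators defined in \cref{eq:depolarizing,eq:damping}. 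This step defines what $\delta$ and $\gamma$ are and accounts for the presence of the follow-up single-qubit Haar gate, which leaves each qubit in the symmetric two-copy subspace because the noise output already lives there and the Haar second-moment operator preserves it.

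Next I would extend to the full $N$-qubit basis state $\ket{S}$. By definition, $\ket{S}$ is the uniform average over $\binom{N}{S}$ tensor products of $\mc I/4$'s and $\mathcal S/2$'s with exactly $S$ swaps. Since the noise factors as a tensor product of identical single-qubit channels and the basis state is permutation-symmetric, the output is the uniform average over all configurations obtained by independently applying the single-qubit transfer matrix to each slot. Collecting the result by total output swap-Hamming-weight gives back the basis $\{\ket{S'}\}$ with coefficients determined by independent binomial counts on the swap-slots and identity-slots.

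Then I would carry out the combinatorial counting. Let $a$ denote the number of original swap slots that flip to identity slots (each with probability $\gamma$); the remaining $S-a$ swaps survive with probability $1-\gamma$. Simultaneously, $c := S' - (S-a)$ identity slots must flip to swap slots (each with probability $\delta$), while the other $N-S-c$ identity slots remain (each with probability $1-\delta$). Multiplying by the binomial coefficients $\binom{S}{a}$ and $\binom{N-S}{c}$ that count which slots flip, substituting $c = S'-S+a$, and summing over valid $a$ reproduces the stated expression. The bounds $\max(0,S-S')\leq a \leq \min(S,N-S')$ follow from the constraints $0 \leq a \leq S$ and $0 \leq c \leq N-S$.

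The main obstacle is the bookkeeping in the second step: making rigorous that the tensor-product noise acts diagonally on the permutation-averaged basis $\{\ket{S}\}$, rather than producing off-symmetric remainders. This is handled by noting that $\ket{S}$ is already the projection onto the permutation-symmetric span of products of $\mc I/4$ and $\mathcal S/2$, and that both the noise channel and the subsequent random Haar rotation commute with this projection. Once that is in place, the rest is routine combinatorics, and the summation limits on $a$ fall out automatically from nonnegativity of the binomial arguments.
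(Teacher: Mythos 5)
Your proposal is correct and follows essentially the same route as the paper: a $2\times 2$ per-qubit transfer matrix in the $\{\mc I/4,\mathcal S/2\}$ basis for the noise-plus-Haar-gate step, followed by the same multinomial bookkeeping over the four transition types $\mc I\to\mc I$, $\mc I\to\mathcal S$, $\mathcal S\to\mc I$, $\mathcal S\to\mathcal S$, with the identical derivation of the summation limits on $a$. One small caution: for non-unital noise such as amplitude damping, $\mathcal N^{\otimes 2}$ alone does \emph{not} map the span of $\{\mc I,\mathcal S\}$ to itself, so your displayed per-qubit equations should be read as including the subsequent single-qubit Haar second-moment projection (as you acknowledge in words), which is exactly how the paper defines $\delta$ and $\gamma$.
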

\begin{proof}[Proof sketch]
The single-qubit transfer matrix of a noise channel followed by a random single-qubit gate takes the general form of 
\begin{align}
    N = \begin{bmatrix}
        1 - \delta & \gamma \\
        \delta & 1-\gamma
    \end{bmatrix}
\end{align}
where the rows and columns are indexed by $\mc I$ and $\mathcal S$. $\delta$ and $\gamma$ can both be calculated by applying \cref{eq:haaravg} for the given noise channel. For a given qubit, there are four possibilities allowed by this transfer matrix: $\mc I \rightarrow \mc I$, $\mc I \rightarrow \mathcal S$, $\mathcal S \rightarrow \mc I$, and $\mathcal S \rightarrow \mathcal S$. In \cref{eq:noisetransferapp}, $a$ represents the number of qubits that undergo the transition $\mathcal S \rightarrow \mc I$. Notice that if $S > S'$ then there must be at least $S - S'$ qubits that experience $\mathcal S \rightarrow \mc I$ which is why $a$ takes this as its minimum value. In addition, the output contains at most $N-S'$ $\mc I$ operators. Furthermore, there are only $S$ qubits that contain $\mathcal S$ operator. Therefore $a$ cannot exceed $N-S'$ or $S$ which is why these form the upper limits of the sum. Now, the remaining quantities can be deduced from $a$ as follows. There are $S-a$ qubits experiencing the transition $\mathcal S \rightarrow \mathcal S$, $S'-(S-a)$ experiencing the transition $\mc I \rightarrow \mathcal S$, and $N-S'-a$ qubits experiencing the transition $\mc I \rightarrow \mc I$. Accounting for the degeneracies and including the appropriate factor for each transition type results in the desired equation.
\end{proof}

\subsection{Proof of \Cref{theorem:worst-case}}
\label{app:worstcaseproof}
Our proof strategy is to relate coherent information to a \textit{relative entropy distance} which is defined as $D(\rho||\zeta) := \Tr\rho(\log\rho - \log \zeta)$, where we can apply the following convergence result for noisy quantum circuits,
\begin{fact}[Follows from Lemma 3.2 of \cite{JonJoel}]
    Let $X$ and $Y$ be quantum systems, and let $\rho_{XY}$ denote the joint state obtained as follows: start from any input state on $XY$, apply an arbitrary quantum circuit acting only on $X$, which has $d$ layers of interspersed depolarizing noise of strength $p$ acting independently on each qubit in $X$. Then,
    \begin{align}
        D(\rho_{XY}\|\frac{\eye_X}{2^{|X|}} \otimes \rho_{Y}) \leq (1-p)^d|XY|
    \end{align}
\end{fact}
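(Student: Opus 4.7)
\emph{Proof plan.} I would track the relative entropy $\Delta_t := D(\rho_{XY}^{(t)} \,\|\, \eye_X/2^{|X|} \otimes \rho_Y)$ as the state evolves through the $d$ alternating unitary-and-noise layers of the circuit, establishing (a) $\Delta_0 \le |XY|$; (b) each unitary layer leaves $\Delta_t$ invariant; and (c) each noise layer contracts $\Delta_t$ by a factor of at most $(1-p)$. Iterating (b) and (c) over the $d$ layers then yields $\Delta_d \le (1-p)^d |XY|$, which is the claim.

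For (a), I would use the identity
\begin{equation*}
    D\!\left(\rho_{XY} \,\|\, \eye_X/2^{|X|} \otimes \rho_Y\right) = |X| + S(\rho_Y) - S(\rho_{XY}),
\end{equation*}
obtained by expanding the logarithm of $\eye_X/2^{|X|} \otimes \rho_Y$ inside the definition of $D$ and using that the partial trace of $\rho_{XY}$ over $X$ equals $\rho_Y$. Combining this with $S(\rho_Y) \le |Y|$ and $S(\rho_{XY}) \ge 0$ gives $\Delta_0 \le |X| + |Y| = |XY|$. For (b), a unitary $U$ on $X$ commutes with $\eye_X/2^{|X|}$, and since the circuit acts only on $X$ the marginal $\rho_Y$ is unchanged; hence $U \otimes \eye_Y$ conjugates the state and the reference identically, and unitary invariance of relative entropy leaves $\Delta_t$ fixed.

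The main obstacle is (c), the per-layer contraction by $(1-p)$. This is exactly Lemma 3.2 of \cite{JonJoel}, a strong data processing inequality for the noise channel $\mathcal{D}_p^{\otimes |X|} \otimes \mathrm{id}_Y$ against its fixed point $\eye_X/2^{|X|} \otimes \rho_Y$, ultimately resting on the depolarizing convergence bounds of \cite{M_ller_Hermes_2016}. The subtlety is that a naive joint-convexity attempt on the single-qubit decomposition into $(1-p)\rho_{XY}$ plus a partial-replacement term recovers only the trivial DPI bound, because the partial replacer output $(\eye_{X_i}/2) \otimes \mathrm{Tr}_{X_i}(\rho_{XY})$ is not the target reference state. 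The cited lemma instead exploits the global fixed-point structure of the full noise layer to extract the uniform $(1-p)$ contraction; the $Y$ bystander is then accommodated simply by tensoring with the identity and applying data processing.
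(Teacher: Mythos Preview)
Your proposal is correct and aligned with the paper's treatment: the paper gives no proof of this Fact, simply attributing it to Lemma~3.2 of \cite{JonJoel}, and your decomposition into (a) the initial bound $\Delta_0\le |XY|$, (b) unitary invariance, and (c) the per-layer $(1-p)$ contraction is exactly the standard reduction, with (c) correctly identified as the nontrivial cited input resting on \cite{M_ller_Hermes_2016}. Your diagnosis that naive joint convexity only recovers the ordinary DPI, and that the genuine SDPI against the fixed point is what is needed, is also on point.
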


\begin{theorem}[Restatement of \Cref{theorem:worst-case}]
    Let $\tilde{\mc C}$ be an arbitrary quantum circuit acting on $n$ qubits, which has $d$ layers of interspersed depolarizing noise of strength $p$ acting independently on each qubit. Then the coherent information of the encoding channel which encodes a $k$-qubit quantum state $\zeta$ as $\mc E(\zeta) = \tilde{C}(\zeta \otimes \ketbra{0}^{n-k})$, is bounded as follows:
    \begin{align}
        I_c(\frac{\eye}{2^k}, \mc E) \leq e^{-p d}(n+k)-k
    \end{align}
\end{theorem}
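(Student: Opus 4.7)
The bound is really a conversion of the relative entropy convergence Fact into a statement about coherent information. The plan is to identify $X$ with the physical register $B$ of $n$ qubits and $Y$ with the reference register $R$ of $k$ qubits in the Fact, and then unpack the resulting relative entropy inequality into an entropy inequality that upper bounds $I_c$.

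First, I would set up the purified picture from the background: start with the maximally entangled state $\Phi_{RA}$ on the reference plus logical qubits, tensor with $\ket{0}^{\otimes n-k}$ on the ancillas, and run the Stinespring dilation $U^{A \to BE}$. The noisy circuit acts only on the data register (which becomes $B$); neither the dilation environment $E$ nor the reference $R$ is touched by the depolarizing noise. Because $\ket{\phi}_{RBE}$ is pure, $S(E) = S(RB)$, so
\begin{align}
I_c\!\left(\tfrac{\eye}{2^k},\mc E\right) \;=\; S(B) - S(RB).
\end{align}
I would also note that the reduced state on $R$ is unchanged by the channel and equals $\eye_R/2^k$, so $S(R)=k$; and that trivially $S(B) \le n$.

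Next, I would invoke the Fact with $X=B$ and $Y=R$. Since the circuit (with $d$ layers of $p$-depolarizing noise on each qubit) acts only on $X$, the hypothesis is satisfied with $|XY| = n+k$, giving
\begin{align}
D\!\left(\rho_{BR}\,\big\|\,\tfrac{\eye_B}{2^n}\otimes \rho_R\right) \;\le\; (1-p)^d(n+k).
\end{align}
Expanding the left-hand side using $-\Tr \rho_{BR}\log\!\bigl(\tfrac{\eye_B}{2^n}\otimes\rho_R\bigr) = n + S(R)$ yields
\begin{align}
-S(BR) + n + S(R) \;\le\; (1-p)^d(n+k),
\end{align}
and substituting $S(R)=k$ rearranges to the lower bound $S(BR) \ge n+k - (1-p)^d(n+k)$.

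Finally, combining this with $S(B)\le n$ gives
\begin{align}
I_c \;=\; S(B)-S(BR) \;\le\; n - (n+k) + (n+k)(1-p)^d \;=\; -k + (n+k)(1-p)^d,
\end{align}
and the elementary inequality $(1-p)^d \le e^{-pd}$ finishes the proof. There is no real obstacle here beyond correctly pairing registers with $X,Y$ in the Fact and carefully accounting for $S(R)=k$; the quoted convergence result from \cite{JonJoel} does all of the analytic work, and everything else is a one-line entropy manipulation.
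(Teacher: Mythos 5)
Your proposal is correct and follows essentially the same route as the paper: both identify $X=B$, $Y=R$ in the quoted relative-entropy convergence fact, use $S(R)=k$ and $S(B)\le n$, and reduce $I_c=S(B)-S(RB)$ to the bound $D(\rho_{RB}\,\|\,\tfrac{\eye_B}{2^n}\otimes\rho_R)-k\le (1-p)^d(n+k)-k\le e^{-pd}(n+k)-k$. The only cosmetic difference is that you expand the relative entropy directly into $-S(BR)+n+S(R)$, whereas the paper routes the same computation through the conditional-entropy identity $S(B|R)=|B|-D(\rho_{RB}\|\rho_R\otimes\tfrac{\eye_B}{2^{|B|}})$.
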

\begin{proof}
    We will consider the Hilbert spaces corrresponding to $A,B,R,E$ and entropies of the reduced density matrices of $\ketbra{\phi}_{RBE}$, as defined in \cref{sec:background}. We will denote $\rho = \Tr_E(\ketbra{\phi}_{RBE}) = (\mathrm{Id}_R \otimes \mc E)(\ketbra{\Phi_{RA}}) $. We also use the known identity that $S(B|R) = |B|-D(\rho_{RB}\| \rho_R \otimes \frac{\eye_B}{2^{|B|}})$ \cite{markwilde2}.
    \begin{align*}
        I_c &= S(B) - S(E)\\
        &= S(B)- S(RB)\\
        &= S(B)-S(B|R) - S(R)\\
        &\leq n - S(B|R)-k \tag{since $\rho_R$ is maximally mixed}\\
        &= n - (n-D(\rho_{RB}\| \rho_R \otimes \frac{\eye_B}{2^{|B|}})) - k \\
        &= D(\rho_{RB}\| \rho_R \otimes \frac{\eye_B}{2^{|B|}}) - k \\
        &= (1-p)^d (n+k) - k \\
        &\leq e^{-pd}(n+k) - k
    \end{align*}
\end{proof}

\section{Supplementary Plots}
\label{app:plots}
\begin{figure}[h]
    \centering
    \includegraphics[width=0.6\textwidth]{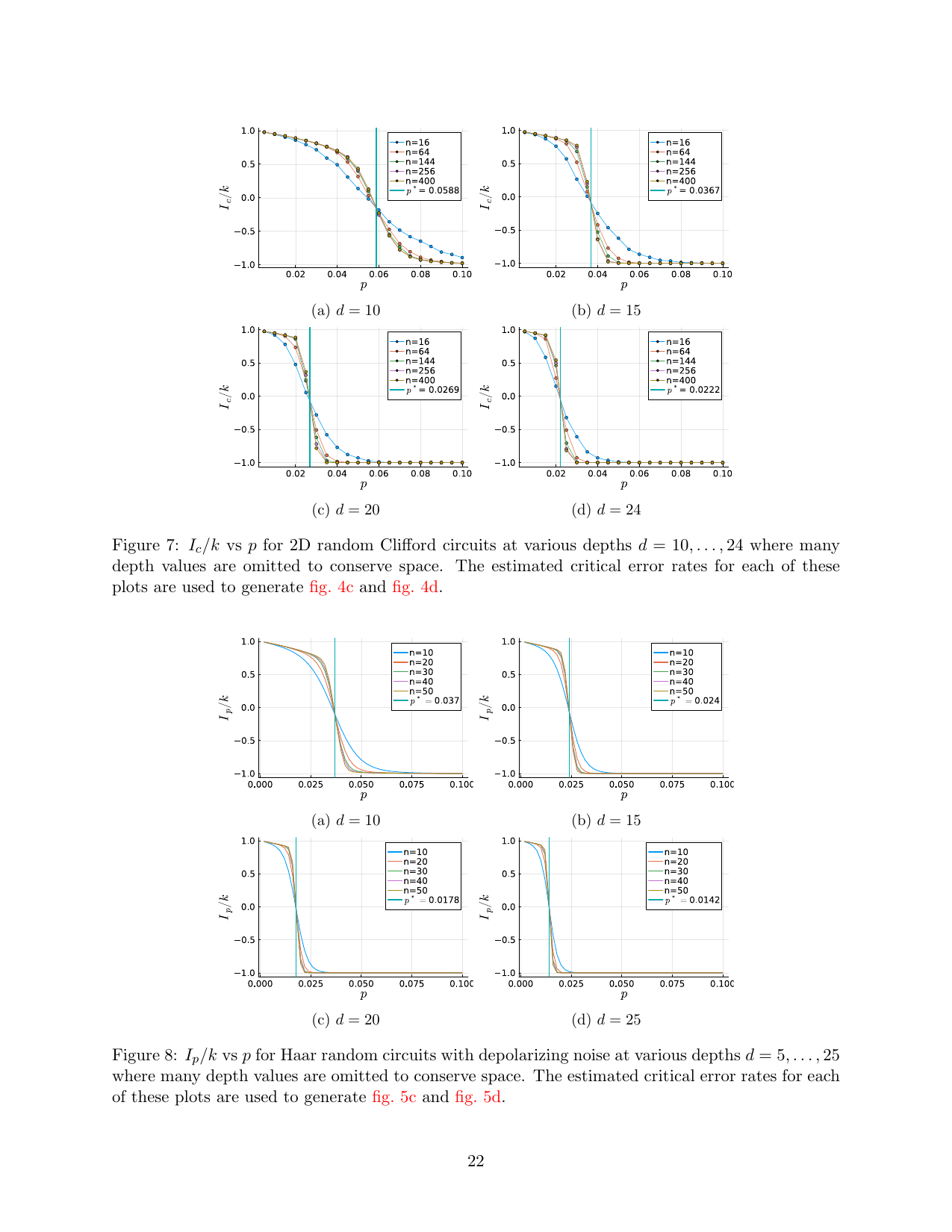}
    \caption{$I_c/k$ vs $p$ for 2D random Clifford circuits at various depths $d=10,\dots,24$ where many depth values are omitted to conserve space. The estimated critical error rates for each of these plots are used to generate \cref{fig:clifford}.} 
\end{figure}
\begin{figure}[h]
    \centering
    \includegraphics[width=0.6\textwidth]{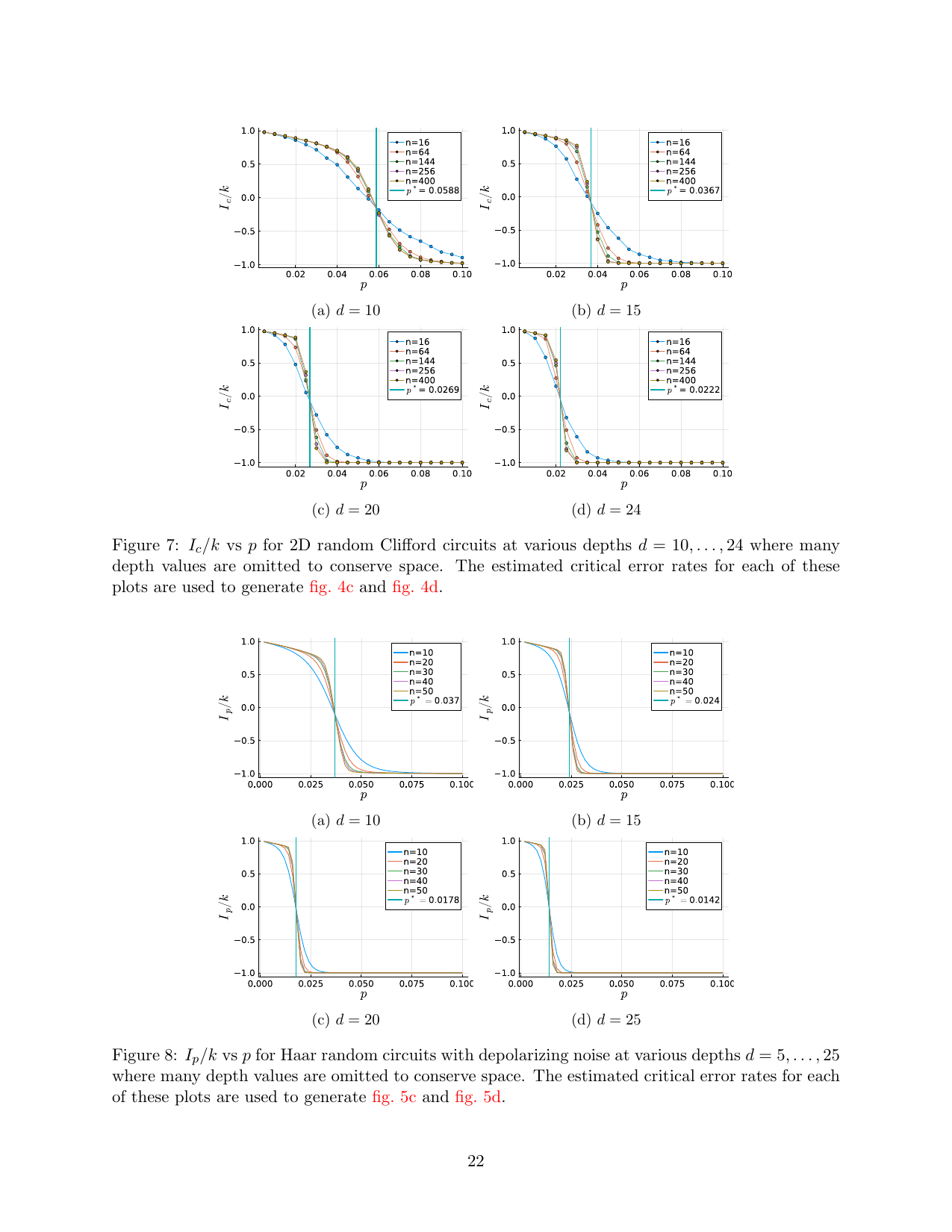}
    \caption{$I_p/k$ vs $p$ for Haar random circuits with depolarizing noise at various depths $d=5,\dots,25$ where many depth values are omitted to conserve space. The estimated critical error rates for each of these plots are used to generate \cref{fig:haardepol}.} 
\end{figure}
\begin{figure}[h]
    \centering
    \includegraphics[width=0.6\textwidth]{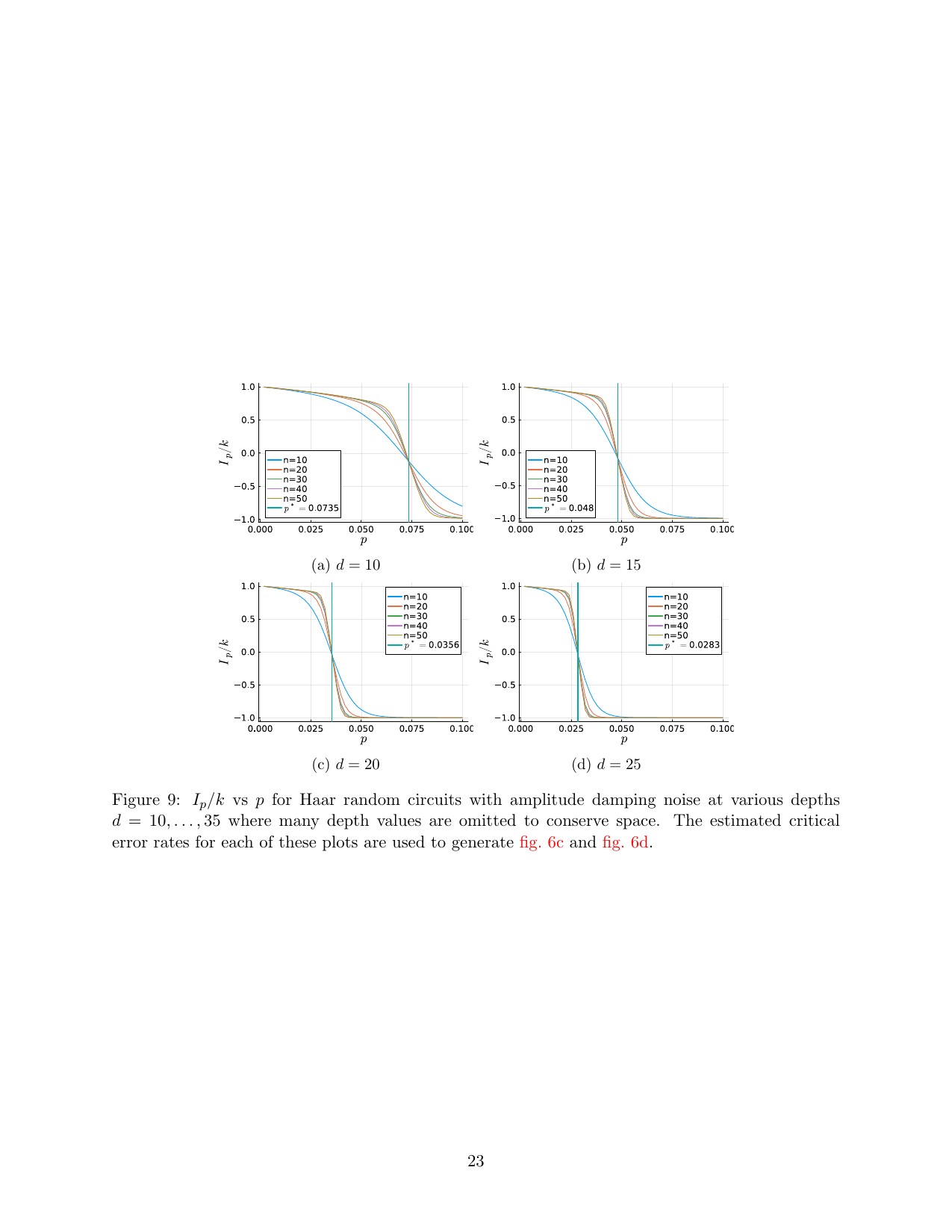}
    \caption{$I_p/k$ vs $p$ for Haar random circuits with amplitude damping noise at various depths $d=10,\dots,35$ where many depth values are omitted to conserve space. The estimated critical error rates for each of these plots are used to generate \cref{fig:haaramp}.} 
\end{figure}

\end{document}